\newcommand{\be}{\begin{equation}}
\newcommand{\ee}{\end{equation}}
\newcommand{\ba}{\begin{eqnarray}}
\newcommand{\ea}{\end{eqnarray}}
\newtheorem{theorem}{Theorem}
\newtheorem{definition}{Definition}
\newtheorem{proposition}{Proposition}
\newtheorem{lemma}{Lemma}
\begin{document}
	
\title{Genuinely entangled subspace with all-encompassing distillable entanglement across every bipartition}	

\author{Sristy Agrawal}
\affiliation{Department of Physical Sciences, Indian Institute of Science Education and Research Kolkata,	Mohanpur 741246, West Bengal, India.}

\author{Saronath Halder}
\affiliation{Department of Mathematics, Indian Institute of Science Education and Research Berhampur, Transit Campus, Government ITI, Berhampur 760010, India.}

\author{Manik Banik}
\affiliation{S.N. Bose National Center for Basic Sciences, Block JD, Sector III, Salt Lake, Kolkata 700098, India.}

\begin{abstract}
In a multipartite scenario quantum entanglement manifests its most dramatic form when the state is genuinely entangled. Such a state is more beneficial for information theoretic applications if it contains distillable entanglement in every bipartition. It is, therefore, of significant operational interest to identify  subspaces of multipartite quantum systems that contain such properties apriori. In this letter, we introduce the notion of unextendible biseparable bases (UBB) that provides an adequate method to construct genuinely entangled subspaces (GES). We provide explicit construction of two types of UBBs -- party {\it symmetric} and party {\it asymmetric} -- for every $3$-{\it qudit} quantum system, with local dimension $d\ge 3$. Further we show that the GES resulting from the symmetric construction is indeed a {\it bidistillable} subspace, i.e., all the states supported on it contain distillable entanglement across every bipartition. 
\end{abstract}



\maketitle
\section{Introduction} 
Entanglement is one of the most intriguing features of multipartite quantum system involving more than one spatially separated subsystems. States of a composite quantum system that are not  statistical mixture of product states of its constituent subsystems are called entangled \cite{Schrodinger35,Werner89}. This definition, which is basically a negation of a particular mathematical form, makes the identification, characterization, and quantification of entanglement a complicated task. The available entanglement detection schemes range from Bell inequalities \cite{Bell64}, EPR-Schr\"{o}dinger steering \cite{Einstein35,Schrodinger36}, entropic inequalities to the positive map based criteria (see \cite{Horodecki09, Guhne09, Kye13} and references therein). However, none of these methods are operational, i.e. for a given unknown state there is no known efficient algorithm to check whether the state is entangled or not. In fact, it has been shown that construction of universal entanglement witnesses with high accuracy is NP-Hard \cite{Gurvits04}.

To add to its complexity, entanglement is extremely fragile against decoherence \cite{Novotny11} and in most of the practical scenarios it appears in impure (mixed) form. But its pure form is known to be the most useful as operational resource. It is, therefore, desirable to obtain pure entanglement from the mixed ones through the process of entanglement purification/distillation \cite{Bennett96(1),Plenio07}. However, not all mixed entangled states are distillable. This leads us to the concept of bound entanglement -- entangled states with positive partial transpose (PPT) are the known such examples \cite{Horodecki97,Horodecki98}. It may be the case that these are not the only examples of bound entanglement as existence of such states with negative partial transpose (NPT) has been conjectured much earlier \cite{DiVincenzo00,Dur00}. Proving (or disproving) this conjecture is one of the hardest challenge in entanglement theory. In fact, it is not yet known whether question of distillability is at-all {\it decidable} \cite{Wolf11}. It is, therefore, utmost important to identify nontrivial subspaces (possibly of optimal dimension) of a bipartite Hilbert space with the feature that all density operators supported on those are distillable. The only example of such a subspace (of dimension $4$) arguably follows from the construction of (NPT) subspace of Ref.\cite{Johnston13} when combined with the recent result of Chen \& Djokovic \cite{Chen16}.

Study of entanglement becomes even more complicated for multipartite systems as different inequivalent types of entanglement arise there \cite{Horodecki09, Guhne09}. The most interesting form is when all the parties are genuinely entangled \cite{GHZ}. Several methods, albeit, not universal, are known to detect genuineness of the multipartite entanglement \cite{Guhne09}. On the other hand, due to different inequivalent classes of genuine entanglement \cite{Dur00(1)}, the notion of multipartite distillability does not have an obvious unique generalization of its bipartite counterpart. Nevertheless, distillability across bipartitions is still well defined. However, unlike the bipartite scenario, we do not know of any nontrivial multipartite subspace which is distillable across every bipartition.  

Given the difficulties in detecting and quantifying entangled states in general and more so for multipartite genuine entanglement, construction of subspaces possessing such properties apriori is a systematic way forward. Several algorithms for constructing completely entangled subspaces (CES) -- subspace containing no product vector -- are known for bipartite as well as multipartite scenarios \cite{Bennett99,Bennett99(1),Alon01,DiVincenzo03,Parthasarathy04,Hayden06,Niset06,Feng06,Cubitt08,Walgate08,Augusiak11,Childs13,Johnston14,Chen15,Halder18}. However, in a multipartite scenario, as entanglement appears in different inequivalent forms, a coherent formalism to construct subspaces containing only a particular type of entanglement is more demanding. In this letter, we introduce the concept of unextendible $k$-separable bases (UKB) which sufficiently serves the purpose. The case $k = 2$,  which we refer to as unextendible biseparable bases (UBB), evidently leads to a genuinely entangled subspace (GES)- subspace with the property that all density matrices supported on it are genuinely entangled. While a UBB can be constructed in different ways, our construction stems from multipartite unextendible product bases  (UPB). Here we note that, the unextendibility feature of a multipartite UPB is generally not preserved under different spatial configurations. Such a multipartite UPB, and to the best of our knowledge all the known examples, can be converted into a complete orthogonal bases by allowing entanglement among a subset of parties only. By assuring unextendibility across different spatial configurations one can obtain different inequivalent types of entangled subspaces, with GES being the most constrained one. We provide explicit construction of two types of UBBs - (i) party symmetric, (ii) party asymmetric, for every tripartite Hilbert space $(\mathbb{C}^d)^{\otimes 3}$, with $d\ge 3$. For the asymmetric construction, arguably, $d=3$ achieves the minimum local dimension required. Both these UBBs as well as the UPB from which they stem have elegant block structure(s) in a three dimensional {\it block-cube} of size $d\times d\times d$. We then proceed to analyze distillability feature of the constructed subspaces.  Interestingly, we show that the symmetric GES turns out to be bidistillable, in fact all the density matrices supported on it are $1$-distillable across every bipartition. However, this is not the case for asymmetric construction where the distillability feature appears only across certain partitions.  At this point it is important to note that the only other known construction of GESs has been proposed very recently in \cite{Demianowicz18}. Since this construction arises from nonorthogonal UPBs analysis of the distillability feature is not immediate there. We further show that the normalized projector onto the complementary subspace of the UPB is PPT entangled in every bipartition. Before proceeding to the main results we first briefly discuss some preliminary concepts.

\section{Notations and preliminary} 
An $m$-partite pure quantum state $\ket{\psi}_{k-sep}\in\otimes_{i=1}^{m}\mathbb{C}^{d_i}$ is called $k$-separable ($k\le m$), if and only if it can be written as a product of $k$ sub-states, i.e., $\ket{\psi}_{k-sep}=\otimes_{i=1}^{k}\ket{\psi}_{\mathcal{P}_i}$, where $\mathcal{P}_i$'s are the nonempty disjoint partitions of the party index, i.e., $\mathcal{P}_1\sqcup\cdots\sqcup\mathcal{P}_k=\{1,\cdots,m\}$. Convex mixtures of such states form $k$-separable density matrices. The cases $k=m$ and $k=2$ correspond to fully separable and biseparable states respectively. States that do not admit biseparable decomposition are genuinely entangled.

A set of pairwise orthogonal product vectors $\{\otimes_{j=1}^m|\psi\rangle^i_j\}_{i=1}^{n}$ spanning a proper subspace of $\otimes_{j=1}^m\mathbb{C}^{d_j}$ (i.e. $n<\Pi_{j=1}^md_j$) is called a UPB if its complementary subspace contains no product state \cite{Bennett99}. While the complementary subspace of a UPB is CES, our aim here is to construct a GES. To do so, we introduce 
a more generalized notion called UBB.
\begin{definition}\label{def1}
A set of pairwise orthogonal states $\{|\psi\rangle^i\}_{i=1}^n$ spanning a proper subspace of $\otimes_{j=1}^m\mathbb{C}^{d_j}$ is called an unextendible biseparable bases (UBB), if all the states $|\psi\rangle^i$ are biseparable and its complementary subspace contains no biseparable state. 
\end{definition}
The subspace complementary to the subspace spanned by the vectors in a UBB is a GES. Definition \ref{def1} can straightforwardly be generalized to introduce the notion of unextendible k-separable bases (UKB). Albeit slightly tedious, this, to the best of our knowledge, is the only prescription which helps construct and delineate inequivalent types of entangled subspaces in multipartite scenario. In present letter, however, our main focus is GES. Without further ado we now proceed to construct such GESs for tripartite systems. 
What follows next is the detailed constructions and analysis for three-qutrit system.

\section{Constructions in $(\mathbb{C}^3)^{\otimes 3}$} 
Consider the computational bases $\{|p\rangle_A\otimes|q\rangle_B\otimes|r\rangle_C~|~p,q,r=0,1,2\}$ for a three-qutrit system shared among Alice, Bob, and Charlie. The state can be represented in a $3\times3\times 3$ block-cube composed of $27$ basic blocks (analogous to {\it Rubik's cube}). The basic block indexed $`pqr'$ contains the state $\ket{p}_A\otimes\ket{q}_B\otimes\ket{r}_C$. Instead of the computational bases, let us now consider the following {\it twisted} orthogonal product bases (t-OPB), 
\begin{subequations}\label{tCOPB}
	\begin{align}
	\mathcal{B}_0:&=\{\ket{\psi}_{kkk}\equiv\ket{k}_A\otimes\ket{k}_B\otimes\ket{k}_C|k\in\{0,1,2\}\},\\
	\mathcal{B}_1:&=\{\ket{\psi(i,j)}_{1} \equiv \ket{0}_A\otimes\ket{\eta_i}_B\otimes\ket{\xi_j}_C\},\\
	\mathcal{B}_2:&=\{\ket{\psi(i,j)}_{2} \equiv \ket{\eta_i}_A\otimes\ket{2}_B\otimes\ket{\xi_j}_C\},\\
	\mathcal{B}_3:&=\{\ket{\psi(i,j)}_{3} \equiv \ket{2}_A\otimes\ket{\xi_j}_B\otimes\ket{\eta_i}_C\},\\
	\mathcal{B}_4:&=\{\ket{\psi(i,j)}_{4} \equiv \ket{\eta_i}_A\otimes\ket{\xi_j}_B\otimes\ket{0}_C\},\\
	\mathcal{B}_5:&=\{\ket{\psi(i,j)}_{5} \equiv \ket{\xi_j}_A\otimes\ket{0}_B\otimes\ket{\eta_i}_C\},\\
	\mathcal{B}_6:&=\{\ket{\psi(i,j)}_{6} \equiv \ket{\xi_j}_A\otimes\ket{\eta_i}_B\otimes\ket{2}_C\},
	\end{align}
\end{subequations}
where $i,j\in\{0,1\}$, and $\ket{\eta_i}:=\ket{0}+(-1)^i\ket{1}$, $\ket{\xi_j}:=\ket{1}+(-1)^j\ket{2}$. We ignore the normalization of the quantum states for simplicity. This t-OPB possesses an elegant geometric representation on the $3\times3\times 3$ block-cube (see Fig.\ref{fig1}). The state $\ket{\psi}_{kkk}$ appears in the body-diagonal.
Each of the sub-blocks  $\mathcal{B}_l$, with $l\in\{1,\cdots,6\}$, consists of four adjacent basic blocks and resides on one of the six faces of the cube. All the sub-blocks and body-diagonal blocks are mutually non overlapping which guarantees orthogonality among the states in different blocks. Orthogonality among the states within a sub-block is guaranteed through construction. We will now construct a UPB from the t-OPB (\ref{tCOPB}). For that, let us first introduce a state  $\ket{S}:=(\ket{0}+\ket{1}+\ket{2})^{\otimes 3}$, which spreads all over the cube in Fig.\ref{fig1}. 
\begin{figure}[t!]
	\begin{center}
		\includegraphics[width=8cm,height=3.5cm]{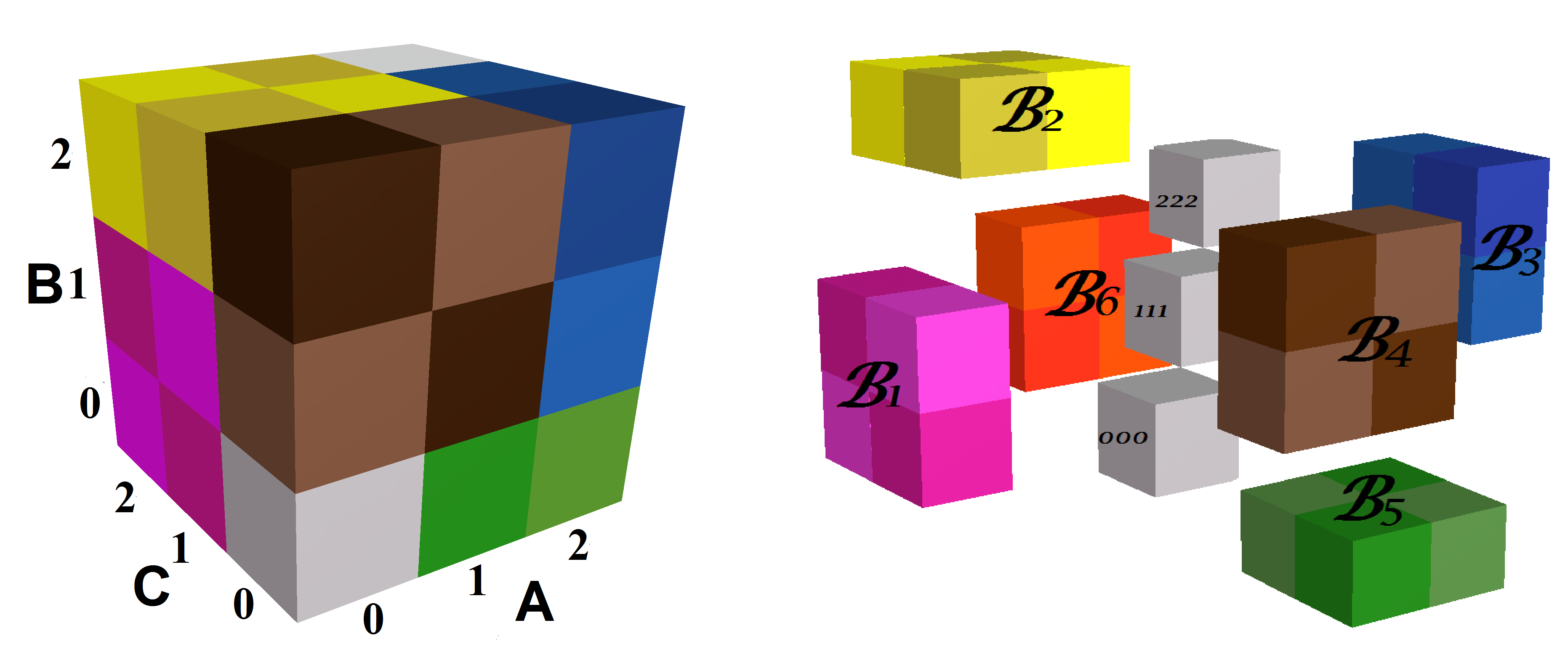}
		\caption{(Color online) The block structure for the t-OPB (\ref{tCOPB}). An arbitrary sub-block made of $n$ basic blocks can contain at most $n$ mutually orthogonal states. Each sub-block $\mathcal{B}_l$ is made of $4$ basic blocks, $\forall~l\in\{1\cdots6\}$. In the right figure all the six sub-blocks and the three body-diagonal blocks are shown.}
		\label{fig1}
	\end{center}
\end{figure}
\begin{proposition}\label{prop}
The set of $19$ states, $\mathcal{U}^{S}_{PB}:=\left\{\bigcup_{l=1}^6\left\{\mathcal{B}_l\setminus\ket{\psi(0,0)}_l\right\}\bigcup\ket{S}\right\}$ is a UPB in $(\mathbb{C}^3)^{\otimes 3}$.
\end{proposition}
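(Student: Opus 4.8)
The plan is to verify the three defining properties of a UPB in turn: that the nineteen listed vectors are product states, that they are pairwise orthogonal, and that no product state is orthogonal to all of them. The first property is immediate, as every vector is manifestly a product across the cut $A|B|C$. For orthogonality I would observe that two states lying in distinct sub-blocks $\mathcal{B}_l$ have disjoint supports in the computational basis and are therefore orthogonal, while two retained states inside a common $\mathcal{B}_l$ are orthogonal because $\langle\eta_i|\eta_{i'}\rangle$ and $\langle\xi_j|\xi_{j'}\rangle$ vanish as soon as the indices differ. The only new checks concern $\ket{S}$: from $(\bra{0}+\bra{1}+\bra{2})\ket{\eta_i}=1+(-1)^i$ and $(\bra{0}+\bra{1}+\bra{2})\ket{\xi_j}=1+(-1)^j$ one sees that $\ket{S}$ is orthogonal to $\ket{\psi(i,j)}_l$ exactly when $i=1$ or $j=1$, i.e.\ for precisely the three retained states of each block. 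This also explains why $\ket{\psi(0,0)}_l$ must be discarded: it is the unique member of $\mathcal{B}_l$ for which both factors equal $2$, hence the only one not orthogonal to $\ket{S}$.

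For unextendibility I would recast the complement. The eighteen retained states together with the nine discarded ones -- the three body-diagonal $\ket{kkk}$ and the six $\ket{\psi(0,0)}_l$ -- form the full t-OPB and thus span the whole space; the nine discarded vectors are mutually orthogonal and span a nine-dimensional space $W$. A vector is orthogonal to all eighteen retained states precisely when it lies in $W$. Moreover, the three retained states of $\mathcal{B}_l$ are exactly the mean-zero (``difference'') vectors supported on the four-element box $P_l\times Q_l\times R_l$ occupied by that block, so membership in $W$ is equivalent to the computational-basis coefficients being constant on each of the six boxes (the three diagonal entries staying free). Since $\ket{S}=\sum_k\ket{kkk}+\sum_l\ket{\psi(0,0)}_l\in W$, the UPB complement is exactly $W\cap\ket{S}^{\perp}$, and the whole problem reduces to showing that this eight-dimensional space contains no product vector.

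Finally, suppose $\ket{a}\otimes\ket{b}\otimes\ket{c}\in W$ is nonzero, so that its coefficient tensor $a_p b_q c_r$ is constant on every box. The clean core is the case in which all nine amplitudes are nonzero: then box-constancy forces $a$ to be constant on each $P_l$, $b$ on each $Q_l$ and $c$ on each $R_l$, and since both $\{0,1\}$ and $\{1,2\}$ occur among the $P_l$ (and likewise among the $Q_l$ and $R_l$) this gives $a_0=a_1=a_2$, $b_0=b_1=b_2$, $c_0=c_1=c_2$, i.e.\ $\ket{a}\otimes\ket{b}\otimes\ket{c}\propto\ket{S}$. The main obstacle is the remaining regime where some amplitude vanishes: here I would propagate the zeros through the constancy conditions to show that any such product vector must collapse onto a single box or a single diagonal block, so that up to scalars the only product states in $W$ are $\ket{S}$, the $\ket{kkk}$, and the $\ket{\psi(0,0)}_l$. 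This finite but tedious case analysis I would streamline using the cyclic $A\to B\to C\to A$ symmetry of the construction, which permutes $\{\mathcal{B}_1,\mathcal{B}_4,\mathcal{B}_5\}$ and $\{\mathcal{B}_2,\mathcal{B}_3,\mathcal{B}_6\}$ among themselves and fixes $\ket{S}$. Each of the listed product states has overlap $27$, $1$, or $4$ with $\ket{S}$ -- all nonzero -- so none lies in $\ket{S}^{\perp}$; hence $W\cap\ket{S}^{\perp}$ contains no product vector and the nineteen states form a UPB.
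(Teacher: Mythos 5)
Your proposal is correct and follows essentially the same route as the paper: reduce the orthogonal complement to the nine-dimensional span of the missing states (the three $\ket{kkk}$ and the six $\ket{\psi(0,0)}_l$) and then use the stopper $\ket{S}$ to rule out product vectors there, since every product vector in that span has nonzero overlap with $\ket{S}$. Your box-constancy characterization of the missing-state span merely makes explicit what the paper dismisses as ``evident from the block structure,'' and the zero-amplitude case analysis you sketch does close as claimed (any partially supported product vector is forced onto a single sub-block or diagonal cell).
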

\begin{proof}
The argument simply follows from the block structure (Fig.\ref{fig1}). The states in $\mathcal{B}_0$ and the state $\ket{\psi(0,0)}_l\in\mathcal{B}_l,~\forall~l\in\{1,\cdots,6\}$ are $\not\perp$ to $\ket{S}$. Linear combinations of these $9$ missing states can be $\perp$ to $\ket{S}$. But, any such linear combination is always entangled at least in one of three possible bipartitions, which is evident from the block structure. Therefore the orthogonal complement of the subspace spanned by the set of vectors $\mathcal{U}^{S}_{PB}$ contains no product state. 
\end{proof}
The above UPB is symmetric under cyclic permutation of the parties and leads to a $8$ dimensional CES which we denote by $\mathcal{CE}(8)$. 
By relaxing the condition of full separability and allowing biseparable states one can extend the set $\mathcal{U}^{S}_{PB}$. For example, the state $\ket{\psi}^-_{12}:=\ket{\psi(0,0)}_1-\ket{\psi(0,0)}_2$ is orthogonal to all the states in $\mathcal{U}^{S}_{PB}$. This state is separable in AB|C cut, but entangled in other two cuts. Similarly, $\ket{\psi}^-_{34}$ and $\ket{\psi}^-_{56}$ are two other states that are separable in AC|B and A|BC cuts, respectively. The states $\ket{\psi}^-_{12}$, $\ket{\psi}^-_{34}$, and $\ket{\psi}^-_{56}$ are also mutually orthogonal to each other. These three states along with the states in $\mathcal{U}^{S}_{PB}$ form a UBB as shown in the following proposition. 
\begin{proposition}\label{prop1}
	The set of $22$ states, $\mathcal{U}^S_{BB}:=\left\{\mathcal{U}^{S}_{PB}\bigcup\left\{\ket{\psi}^-_{12},\ket{\psi}^-_{34},\ket{\psi}^-_{56}\right\}\right\}$ is a UBB in $(\mathbb{C}^3)^{\otimes 3}$.
\end{proposition}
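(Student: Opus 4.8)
The plan is to verify the three requirements built into Definition~\ref{def1}: that every one of the $22$ vectors is biseparable, that they are pairwise orthogonal, and---the real content---that the $5$-dimensional complementary subspace $W$ contains no biseparable state.

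The first two are quick. Biseparability holds because the $19$ members of $\mathcal{U}^{S}_{PB}$ are fully product, while $\ket{\psi}^-_{12},\ket{\psi}^-_{34},\ket{\psi}^-_{56}$ are product across $AB|C$, $AC|B$ and $A|BC$ respectively, as already observed. For orthogonality, $\mathcal{U}^{S}_{PB}$ is orthogonal by Proposition~\ref{prop}; each $\ket{\psi}^-_{lm}$ is a difference of the two ``corner'' states $\ket{\psi(0,0)}_l,\ket{\psi(0,0)}_m$ deleted from the blocks in Fig.~\ref{fig1}, so the disjointness of the block supports makes all remaining inner products---including the three mutual ones---vanish by inspection.

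For the genuine-entanglement step I would first pin $W$ down explicitly. Writing $c_l:=\ket{\psi(0,0)}_l$ and $V_0:=\mathrm{span}\{\ket{000},\ket{111},\ket{222}\}$, one checks that $\ket{S}=\left(\ket{000}+\ket{111}+\ket{222}\right)+\sum_{l=1}^{6}c_l$ and that each $\ket{\psi}^-_{lm}=c_l-c_m$ lies in the $9$-dimensional space $D:=V_0\oplus\mathrm{span}\{c_1,\dots,c_6\}$. Hence all $22$ vectors to which $W$ must be orthogonal act within $D$, giving $W\subset D$ with the clean parametrisation
\[
v=\alpha\ket{000}+\beta\ket{111}+\gamma\ket{222}+\mu_1(c_1+c_2)+\mu_2(c_3+c_4)+\mu_3(c_5+c_6),
\]
where orthogonality to the three $\ket{\psi}^-_{lm}$ has already forced the paired coefficients to agree and orthogonality to $\ket{S}$ imposes one further relation $\alpha+\beta+\gamma+8(\mu_1+\mu_2+\mu_3)=0$. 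By the cyclic symmetry $A\to B\to C\to A$, under which $\mathcal{U}^{S}_{BB}$ and hence $W$ are invariant while the three one-versus-two cuts are permuted, it is enough to exclude a $v\in W$ that is product across $A|BC$. Grouping the amplitudes of $v$ by the value of the $A$-register yields three ``slices'' $X_0,X_1,X_2\in\mathbb{C}^9_{BC}$, and writing the product vector as $\ket{a}_A\otimes\ket{\chi}_{BC}$ with $\ket{a}=a_0\ket{0}+a_1\ket{1}+a_2\ket{2}$ makes $X_i=a_i\ket{\chi}$; being product across $A|BC$ is thus exactly the demand that $X_0,X_1,X_2$ be pairwise proportional. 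A short case analysis then finishes: the $\ket{10}$ and $\ket{20}$ amplitudes equal $\mu_2$ in all three slices, so $\mu_2\neq0$ would force $a_0=a_1=a_2$ and then, through the remaining amplitudes and the linear relation, $v=0$; hence $\mu_2=0$. Analogous comparisons force $\mu_1=0$ and then $\mu_3=0$, after which the linear relation annihilates the surviving diagonal part, leaving only $v=0$.

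The crux, and the step most likely to resist, is this proportionality analysis. One is tempted to shortcut it by arguing that the product-across-$A|BC$ members of $\mathcal{U}^{S}_{BB}$ already form a bipartite unextendible product basis in $\mathbb{C}^3\otimes\mathbb{C}^9$, whose complement would then contain no $A|BC$-product vector. This is \emph{false}: the underlying set is only a UPB, its complement $\mathcal{CE}(8)$ is a mere CES, and one can write down an explicit $\ket{0}_A\otimes\ket{\chi}_{BC}$ orthogonal to every product member. Such stray biseparable vectors are eliminated only when orthogonality to the \emph{entangled} states $\ket{\psi}^-_{12}$ and $\ket{\psi}^-_{34}$ is also imposed. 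All $22$ constraints---not merely the product ones---therefore have to enter the argument, which is exactly why the honest slice computation cannot be replaced by an appeal to bipartite unextendibility.
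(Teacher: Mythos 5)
Your proof is correct, and while it starts from the same skeleton as the paper's argument, it carries the decisive step considerably further. Both approaches first confine the complementary subspace $W$ to the $9$-dimensional span of the nine missing states and note that orthogonality to $\ket{\psi}^-_{12},\ket{\psi}^-_{34},\ket{\psi}^-_{56}$ and to $\ket{S}$ reduces everything to combinations of $\ket{000},\ket{111},\ket{222},\ket{\psi}^+_{12},\ket{\psi}^+_{34},\ket{\psi}^+_{56}$ subject to one linear relation. The paper then argues by block counting that each individual candidate $\ket{\psi}^+_{lm}$ fails orthogonality to $\ket{S}$ and simply \emph{asserts} that any surviving linear combination is genuinely entangled; you actually prove that assertion, via the cyclic-symmetry reduction to the $A|BC$ cut and the slice-proportionality computation. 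I have checked the bookkeeping (the $\ket{10}$ and $\ket{20}$ amplitudes do carry $\mu_2$ in all three slices, and the subsequent comparisons do force $\mu_1=\mu_3=0$), and the case analysis closes as claimed --- though at the last step you should say explicitly that \emph{both} the proportionality of $\alpha\ket{00},\beta\ket{11},\gamma\ket{22}$ (which leaves at most one nonzero diagonal coefficient) \emph{and} the relation $\alpha+\beta+\gamma+8(\mu_1+\mu_2+\mu_3)=0$ are needed, since the relation alone does not annihilate a two-term diagonal combination. Your closing caveat is also well taken and dovetails with the paper's own Proposition 3 and Lemma 3: the $A|BC$-product members of $\mathcal{U}^S_{BB}$ are not a bipartite UPB, and the stray product vectors witnessing this (e.g. $4\ket{000}-\ket{\psi(0,0)}_1$) are exactly the states the paper later appends to build the asymmetric UBB. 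One cosmetic correction: the eighteen fully product states of $\mathcal{U}^{S}_{PB}$ do not ``act within $D$'' --- they are orthogonal to $D$, which is precisely what forces $W\subseteq D$, while only $\ket{S}$ and the three difference vectors lie inside $D$.
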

\begin{proof}
The state $\ket{\psi}^-_{12}$ spreads over the sub-blocks $\mathcal{B}_1$ and $\mathcal{B}_2$ (see Fig.1 of the manuscript). These two sub-blocks are made of eight basic blocks and hence they together can contain at most eight mutually orthogonal states. Seven of which have already been kept there (six fully product states and the biseparable state $\ket{\psi}^-_{12}$). The only other state $\perp$ to these seven states is $\ket{\psi}^+_{12}:=\ket{\psi(0,0)}_1+\ket{\psi(0,0)}_2$. However this state is $\not\perp$ to $\ket{S}$ and hence cannot be appended to the set $\mathcal{U}^S_{BB}$. Same is true for states $\ket{\psi}^+_{34}$ and $\ket{\psi}^+_{56}$. Furthermore, any linear combination of these three states and the states in $\mathcal{B}_0$ is either $\not\perp$ to $\ket{S}$ or genuinely entangled, and hence cannot be appended to $\mathcal{U}^S_{BB}$. This completes the proof.
\end{proof}

The $5$-dimensional subspace $\mathcal{GE}(5)$, complementary to $\mathcal{U}^S_{BB}$, is genuinely entangled. For our case, let us proceed to prove that the subspace $\mathcal{GE}(5)$ is $1$-distillable across all bipartitions. To do so, we first state the following lemma. 
\begin{lemma}\label{lemma}
Consider an $n$-dimensional subspace $\mathcal{S}_{\alpha\beta}$ of a bipartite Hilbert space $\mathbb{C}^{d_{\alpha}}\otimes\mathbb{C}^{d_{\beta}}$. If the projector $\mathbb{P}_{\alpha\beta}$ on $\mathcal{S}_{\alpha\beta}$ satisfies the condition $\mathcal{R}(\mathbb{P}_{\alpha\beta})<\max [\mathcal{R}(\mathbb{P}_{\alpha}),\mathcal{R}(\mathbb{P}_{\beta})]$, then all the rank-$n$ states supported on $\mathcal{S}_{\alpha\beta}$ are $1$-distillable; where $\mathcal{R}(*)$ denotes rank of a matrix and $\mathbb{P}_{\alpha(\beta)}:=\mbox{Tr}_{\beta(\alpha)}[\mathbb{P}_{\alpha\beta}]$. 
\end{lemma}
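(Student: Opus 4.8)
The plan is to first translate the hypothesis, which is stated for the projector $\mathbb{P}_{\alpha\beta}$, into an equivalent statement about the rank of the state $\rho$ itself, and then to show that the resulting rank gap forces $\rho$ to violate the \emph{reduction criterion}, from which $1$-distillability follows. First I would observe that a rank-$n$ state $\rho$ supported on $\mathcal{S}_{\alpha\beta}$ has support \emph{equal} to $\mathcal{S}_{\alpha\beta}$, so that writing $\rho=\sum_k\lambda_k\ketbra{v_k}{v_k}$ with all $\lambda_k>0$ and $\mathbb{P}_{\alpha\beta}=\sum_k\ketbra{v_k}{v_k}$ over the same orthonormal basis of $\mathcal{S}_{\alpha\beta}$, the reduced operators read $\rho_\alpha=\sum_k\lambda_k\,\Tr_\beta\ketbra{v_k}{v_k}$ and $\mathbb{P}_\alpha=\sum_k\Tr_\beta\ketbra{v_k}{v_k}$. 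Since the range of a sum of positive operators is the span of the individual ranges, independent of the positive weights, one gets $\mathcal{R}(\rho_\alpha)=\mathcal{R}(\mathbb{P}_\alpha)$ and likewise $\mathcal{R}(\rho_\beta)=\mathcal{R}(\mathbb{P}_\beta)$. Hence the hypothesis becomes $\mathcal{R}(\rho)<\max[\mathcal{R}(\rho_\alpha),\mathcal{R}(\rho_\beta)]$ for every such $\rho$; without loss of generality I assume the maximum is realized on side $\alpha$, i.e. $s:=\mathcal{R}(\rho_\alpha)>\mathcal{R}(\rho)=:r$.

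The crux is a normalization (local-filtering) trick. I would define $\tilde\rho:=(\rho_\alpha^{-1/2}\otimes\mathbb{1}_\beta)\,\rho\,(\rho_\alpha^{-1/2}\otimes\mathbb{1}_\beta)$, where $\rho_\alpha^{-1/2}$ is the inverse square root taken on the support of $\rho_\alpha$. Conjugation by this invertible-on-support local operator preserves rank, so $\mathcal{R}(\tilde\rho)=r$, whereas $\Tr_\beta\tilde\rho=\rho_\alpha^{-1/2}\rho_\alpha\rho_\alpha^{-1/2}$ is the projector onto the range of $\rho_\alpha$, giving $\Tr\tilde\rho=s$. Thus $\tilde\rho$ is a positive operator of rank $r$ whose $r$ nonzero eigenvalues sum to $s$, so its largest eigenvalue obeys $\lambda_{\max}(\tilde\rho)\ge s/r>1$. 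Undoing the conjugation, this is exactly the statement that $\rho_\alpha\otimes\mathbb{1}_\beta-\rho$ has a negative eigenvalue, i.e. $\rho$ violates the reduction criterion on the $\alpha$ side.

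Finally I would invoke the fact that a state violating the reduction criterion is $1$-distillable: reduction-map violation guarantees a single-copy local filtering producing a two-qubit state with fully entangled fraction above $1/2$, equivalently a Schmidt-rank-$2$ vector $\ket{\Phi}$ with $\bra{\Phi}\rho^{T_\beta}\ket{\Phi}<0$. Because the argument runs verbatim for every rank-$n$ state supported on $\mathcal{S}_{\alpha\beta}$, and for whichever side attains the maximum, all such states are $1$-distillable. The main obstacle is precisely this last step: the rank gap most directly yields only negativity of the partial transpose, which in general does \emph{not} imply distillability, so the real work is in routing it through the reduction criterion via the eigenvalue bound $\lambda_{\max}(\tilde\rho)\ge s/r$, thereby securing genuine single-copy $1$-distillability rather than merely asymptotic distillability.
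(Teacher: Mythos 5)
Your proof is correct, and at the level of overall strategy it follows the same skeleton as the paper's: both arguments rest on the rank criterion of Horodecki, Smolin, Terhal and Thapliyal --- a bipartite state $\rho$ with $\mathcal{R}(\rho)<\max[\mathcal{R}(\rho_{\alpha}),\mathcal{R}(\rho_{\beta})]$ is $1$-distillable --- and then transfer the hypothesis from the projector $\mathbb{P}_{\alpha\beta}$ to an arbitrary rank-$n$ state on the subspace. The differences lie in how the two halves are treated. For the transfer step the paper writes $\sigma_{\alpha\beta}=p\,\mathbb{P}_{\alpha\beta}+(1-p)\chi_{\alpha\beta}$ and deduces only the inequality $\mathcal{R}(\sigma_{\alpha(\beta)})\ge\mathcal{R}(\mathbb{P}_{\alpha(\beta)})$, which suffices; your observation that the range of a positively weighted sum of positive operators does not depend on the weights yields the sharper equality $\mathcal{R}(\rho_{\alpha(\beta)})=\mathcal{R}(\mathbb{P}_{\alpha(\beta)})$ and is arguably cleaner. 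For the rank criterion itself the paper simply cites the literature, whereas you re-derive it: the filtered operator $\tilde\rho=(\rho_{\alpha}^{-1/2}\otimes\mathbb{I})\,\rho\,(\rho_{\alpha}^{-1/2}\otimes\mathbb{I})$ has rank $r$ and trace $s>r$, so $\lambda_{\max}(\tilde\rho)>1$ and the reduction criterion is violated on the $\alpha$ side; this eigenvalue argument is correct and self-contained. The one loosely stated point is the final step: the filter built from the violating vector produces a $d\times d$ state with fully entangled fraction exceeding $1/d$, not directly a two-qubit state with fraction above $1/2$, and extracting a Schmidt-rank-$2$ vector $\ket{\Phi}$ with $\bra{\Phi}\rho^{T_{\beta}}\ket{\Phi}<0$ requires the further (standard) observation that the partial transpose of an isotropic state with fraction above $1/d$ is negative on the antisymmetric subspace, which is spanned by Schmidt-rank-$2$ vectors. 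Since that is precisely the content of the result the paper itself cites for ``indeed $1$-distillable'', this does not constitute a gap; your version simply makes explicit what the paper treats as a black box.
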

\begin{proof}
It is known that, an arbitrary bipartite state $\rho_{\alpha\beta}$ is distillable (indeed $1$-distillable \cite{Self1}) if $\mathcal{R}(\rho_{\alpha\beta})<\max [\mathcal{R}(\rho_{\alpha}),\mathcal{R}(\rho_{\beta})]$ \cite{Horodecki03}. Therefore the $n$-rank state proportional to $\mathbb{P}_{\alpha\beta}$ is $1$-distillable. An arbitrary rank-$n$ density operator $\sigma_{\alpha\beta}$ supported on $\mathcal{S}_{\alpha\beta}$ allows a decomposition of the form $\sigma_{\alpha\beta}=p~\mathbb{P}_{\alpha\beta}+(1-p)\chi_{\alpha\beta}$, where $p\in (0,1]$ and $\chi_{\alpha\beta}$ is some other density matrix (of rank $\le n$) supported on $\mathcal{S}_{\alpha\beta}$. Such a decomposition guarantees that $\mathcal{R}(\sigma_{\alpha(\beta)})$ cannot be less than $\mathcal{R}(\mathbb{P}_{\alpha(\beta)})$, where $\sigma_{\alpha(\beta)}:=\mbox{Tr}_{\beta(\alpha)}[\sigma_{\alpha\beta}]$. This completes the proof.
\end{proof}
Consider an $n$ dimensional subspace of a bipartite Hilbert space. Consider arbitrary projector (of rank $k\le n$) supported on that subspace. If all such projectors satisfy the criterion of Lemma \ref{lemma} then the subspace indeed turns out to be distillable. Here our interest is to show distillability of the subspace $\mathcal{GE}(5)$. 
\begin{theorem}\label{theorem2}
	The subspace $\mathcal{GE}(5)$ is distillable across every bipartite cut.
\end{theorem}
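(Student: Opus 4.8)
The plan is to verify the hypothesis of Lemma \ref{lemma}---and, through the remark following it, its hereditary version for every sub-projector---separately for each of the three cuts $A|BC$, $B|AC$, $C|AB$. Since the twisted product basis, the three appended states $\ket{\psi}^-_{12},\ket{\psi}^-_{34},\ket{\psi}^-_{56}$, and hence $\mathcal{U}^S_{BB}$ are invariant under cyclic permutation of the parties (with the three bipartitions permuted cyclically among themselves), $\mathcal{GE}(5)$ is cyclically symmetric, so it suffices to treat a single cut, say $A|BC$, with $d_A=3$ and $d_{BC}=9$.

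First I would write down an explicit basis of $\mathcal{GE}(5)$. Orthogonality to the eighteen face states of $\mathcal{U}^{S}_{PB}$ forces any $\ket{\phi}\in\mathcal{GE}(5)$ into the nine-dimensional span of the body-diagonal states $\ket{000},\ket{111},\ket{222}$ and the six corner states $\ket{\psi(0,0)}_l$; imposing orthogonality to $\ket{S}$ and to $\ket{\psi}^-_{12},\ket{\psi}^-_{34},\ket{\psi}^-_{56}$ then cuts this down to five dimensions, spanned by combinations of $\ket{000},\ket{111},\ket{222}$ and the symmetric corner pairs $\ket{\psi}^+_{12},\ket{\psi}^+_{34},\ket{\psi}^+_{56}$ subject to a single linear constraint from $\ket{S}$. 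Writing each basis vector as a $3\times 9$ matrix $M$ (rows indexed by $A$, columns by $BC$), a short computation shows that the $A$-support is all of $\mathbb{C}^3$, so $\mathcal{R}(\mathbb{P}_A)=3$, while the $BC$-support $W_0$ is six-dimensional, so $\mathcal{R}(\mathbb{P}_{BC})=6$. Since $\mathcal{R}(\mathbb{P}_{A|BC})=5<6=\max[\mathcal{R}(\mathbb{P}_A),\mathcal{R}(\mathbb{P}_{BC})]$, Lemma \ref{lemma} already yields $1$-distillability of every rank-$5$ state on $\mathcal{GE}(5)$.

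To cover all states I would invoke the remark after Lemma \ref{lemma}: it is enough that every subspace $\mathcal{S}'\subseteq\mathcal{GE}(5)$ of dimension $k$ obeys $k<\max[\mathcal{R}(\mathbb{P}'_A),\mathcal{R}(\mathbb{P}'_{BC})]$. As $\mathcal{R}(\mathbb{P}'_A)\le d_A=3$, the burden falls entirely on the $BC$ side, and the theorem reduces to the single claim
\[
\mathcal{R}(\mathbb{P}'_{BC})\ge k+1 \quad\text{for every }k\text{-dimensional }\mathcal{S}'\subseteq\mathcal{GE}(5).
\]
Because $\mathcal{R}(\mathbb{P}'_{BC})$ equals the dimension of the $BC$-local support of $\mathcal{S}'$, this is equivalent to the transversality statement that, for every nonzero $W\subseteq W_0$,
\[
\dim\!\big(\mathcal{GE}(5)\cap(\mathbb{C}^3_A\otimes W)\big)\le \dim W-1 .
\]
The plan is to prove this directly: viewing $\mathcal{GE}(5)$ as a five-dimensional space $\mathcal{M}$ of $3\times 6$ matrices over $W_0$, the intersection with $\mathbb{C}^3_A\otimes W$ consists of those matrices all of whose rows lie in $W$, i.e.\ that are annihilated by every functional in $W^\perp$, and one must show that a $q$-dimensional $W^\perp$ removes at least $q$ dimensions from $\mathcal{M}$. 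The extreme cases are immediate---$\dim W=1$ gives intersection $0$ because $\mathcal{GE}(5)$, being genuinely entangled, has no $A|BC$-product vector, and $\dim W=6$ gives the value $5$ computed above---so all the content is in the intermediate $W$.

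The hard part will be exactly this step, because the inequality is \emph{tight}: $\mathcal{GE}(5)$ contains genuinely entangled vectors of $A|BC$-Schmidt rank $2$, e.g.\ $-8\ket{000}+\ket{\psi}^+_{56}$, which is supported on a two-dimensional $W$ and saturates $\dim(\mathcal{GE}(5)\cap(\mathbb{C}^3_A\otimes W))=\dim W-1$, and further intermediate $W$ saturate it as well. No slack is therefore available, so a soft dimension count cannot close the argument; instead I would exploit the explicit, highly structured form of the three row maps $R_0,R_1,R_2:\mathcal{GE}(5)\to W_0$ sending $\ket{\phi}$ to its $A$-conditional $BC$-states---in particular that one of them has only a one-dimensional kernel---to rule out any $W$ that could ``over-resonate'' with $\mathcal{M}$, checking the finitely many resonance patterns permitted by the block structure. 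Establishing this uniform, saturated transversality bound is the crux; once it is in hand, Lemma \ref{lemma} and its remark give $1$-distillability of every state across $A|BC$, and cyclic symmetry extends the conclusion to the remaining two cuts.
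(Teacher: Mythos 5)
Your reduction is sound as far as it goes: the rank-$5$ case via the bimarginal rank $6$ of the full projector matches the paper's Appendix~\ref{appen1}, the appeal to cyclic symmetry is legitimate, and your reformulation of the remaining task as the transversality bound $\dim\big(\mathcal{GE}(5)\cap(\mathbb{C}^3_A\otimes W)\big)\le\dim W-1$ for every nonzero $W\subseteq W_0$ is a correct equivalent of the condition that every rank-$k$ projector on $\mathcal{GE}(5)$ has $BC$-marginal rank at least $k+1$. But the argument stops exactly where the content begins. You explicitly defer the crux --- ``establishing this uniform, saturated transversality bound'' --- to an unexecuted plan of ``checking the finitely many resonance patterns permitted by the block structure,'' without saying what those patterns are, why they are finitely many, or how each is excluded. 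Since you yourself observe that the inequality is saturated (e.g.\ by $-8\ket{000}+\ket{\psi}^+_{56}$), so that no generic or soft dimension count can close it, this step is not a routine verification: it is the entire theorem. As written, the proposal is a correct reduction followed by an acknowledged gap.

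For comparison, the paper closes this step by working directly with the six missing states $\mathcal{M}=\{\ket{kkk}\}_{k=0,1,2}\cup\{\ket{\psi}^+_{12},\ket{\psi}^+_{34},\ket{\psi}^+_{56}\}$, whose span contains $\mathcal{GE}(5)$ as the hyperplane orthogonal to $\ket{S}$. Its facts (III) and (IV) are the structural input you are missing: any $k$ states of $\mathcal{M}$ have every bimarginal of rank at least $k$ (each contributes at least one independent local direction), and any vector of $\mathcal{GE}(5)$, necessarily a combination of at least two missing states, contributes at least two independent directions relative to the remaining ones; an $n$-dimensional sub-projector needs at least $n+1$ missing states, whence every bimarginal has rank at least $n+1$. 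If you want to complete your route, you would have to prove your transversality claim by an analogous analysis of which missing states a candidate $W$ can overlap --- which is essentially facts (III) and (IV) in different clothing, and precisely the part you have not supplied.
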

\begin{proof}
Distillability of all rank-$5$ states (across every bipartitions) supported on $\mathcal{GE}(5)$ simply follows from the rank of the reduced bimarginals of the full projector on $\mathcal{GE}(5)$, which turns out to be $6$ (see Appendix \ref{appen1}). To show the distillability of other states we need to show that all the projectors (of rank $<5$) supported on $\mathcal{GE}(5)$ satisfy the criterion of lemma \ref{lemma}. Since such a projector can be chosen in infinite ways, we thus require an argument that encapsulates all these infinite possibilities. We formalize such an argument using the block structure and the properties of the missing states

Consider an arbitrary projector $\mathbb{P}(n)$ of rank-$n$ acting on $\mathcal{GE}(5)$, where $1\le n\le 5$. The following facts holds true:
\begin{itemize}
	\item[(I)] Any vector belonging in the subspace $\mathcal{GE}(5)$ must be expressed as a linear combination of some states from the set of missing states $\mathcal{M}:=\{\{\ket{kkk}\}_{k=0,1,2}\bigcup\{\ket{\psi}^+_{ij}\}_{i=1,3,5}~|j=i+1\}$.
	\item[(II)] To construct $n$ mutually orthogonal vectors ($1\le n\le 5$) in the subspace $\mathcal{GE}(5)$ at least $(n+1)$ missing states from the set $\mathcal{M}$ are required.
	\item[(III)] Consider arbitrary $k$ number of states $\ket{m}_1,\cdots,\ket{m}_k$ from the set $\mathcal{M}$. For any such choice $\mathcal{R}(\mbox{Tr}_{\alpha}[\sum_{i=1}^k\ket{m}_i\bra{m}])\ge k$, where $\alpha\in\{A,B,C\}$. In other word, while considering mixture of the states from $\mathcal{M}$, each state contributes at least one independent rank in every bi-marginal.
	\item[(IV)] Consider any two arbitrary states $\ket{m'}$ and $\ket{m''}$ from the set $\mathcal{M}$. The state $\ket{\mathcal{L}(m',m'')}\in\mathcal{GE}(5)$ which is obtained through linear combination of $\ket{m'}$ and $\ket{m''}$ has the property $\mathcal{R}(\mbox{Tr}_{\alpha}[\ket{\mathcal{L}(m',m'')}\bra{\mathcal{L}(m',m'')}]):=r_{\alpha}\ge 2$, for $\alpha\in\{A,B,C\}$. Consider now another state $\ket{m}$ from $\mathcal{M}$ other than $\ket{m'}$ and $\ket{m''}$. For any such state $\ket{m}$ $\mathcal{R}(\mbox{Tr}_{\alpha}[\ket{\mathcal{L}(m',m'')}\bra{\mathcal{L}(m',m'')}+\ket{m}\bra{m}])\ge r_{\alpha}+1$, for $\alpha\in\{A,B,C\}$. In other words, the state $\ket{\mathcal{L}(m',m'')}$ contributes at least two independent rank in the bimarginals with respect to the other states in $\mathcal{M}$.
\end{itemize} 
Let us now consider a set $\Lambda$ of $n$ number of states, i.e., $\Lambda:=\{\ket{\mathcal{L}(m',m'')}, \ket{m}_1,\cdots,\ket{m}_{n-1}\}$. Here $\ket{m'}$ and $\ket{m''}$ are two arbitrarily chosen vectors from $\mathcal{M}$ and $\{\ket{m}_i\}_{i=1}^{n-1}$ are again arbitrarily chosen vectors from $\mathcal{M}$ other than $\ket{m'}$ and $\ket{m''}$. The facts (III) and (IV) assure that rank of all the bimarginals of mixture of $n$ states of any such set $\Lambda$ is at least $(n+1)$. Note that apart from $\ket{\mathcal{L}(m',m'')}$ no other state in $\Lambda$ lies in the subspace $\mathcal{GE}(5)$. Construct now a set $\Lambda'$ of $n$ vectors by taking linear combinations of states from $\Lambda$ such that every state in $\Lambda'$ lies in $\mathcal{GE}(5)$. Rank of the bimarginals of the convex mixture of the states in $\Lambda'$ cannot be less than that of the states in $\Lambda$. Therefore bimarginals of an arbitrary projector $\mathbb{P}(n)$ have rank at least $n+1$ which assures distillability of the normalized projectors across every bipartitions.
\end{proof}

The present construction is the first nontrivial example of a GES which is also distillable across every bipartition. Example of nontrivial distillable subspaces even in the bipartite case are known for lower dimensional system only. Clearly a distillable subspace must be a negative partial transpose (NPT) subspace. Therefore, for $\mathbb{C}^{d_1}\otimes{C}^{d_2}$ system, dimension of distillable subspaces is upper bounded by $(d_1-1)(d_2-1)$ \cite{Johnston13}. Since any rank-$4$ bipartite NPT states are distillable \cite{Chen16}, therefore, when the composite system dimension is not more than $9$, the NPT subspace is indeed a distillable subspaces and the explicit construction follows from Ref. \cite{Johnston13}. Though explicit construction of NPT subspaces is known for higher dimensional systems \cite{Johnston13}, but distillability of those subspaces is not immediate. In fact in Ref. \cite{Chen16} the authors have conjectured a bound NPT states of rank-$5$.  
  
While constructing $\mathcal{U}^S_{BB}$, we add three biseparable states symmetrically (one biseparable state in each cut) to the set $\mathcal{U}^{S}_{PB}$. Instead of this, if one keeps adding biseparable states in one particular cut, say AB|C cut, then it results to another UBB as stated in the following proposition. 
\begin{proposition}\label{prop2}
The set of $23$ states $\mathcal{U}^{AB|C}_{BB}:=\left\{\mathcal{U}^{S}_{PB}\bigcup\left\{\ket{\psi}^-_{12},\ket{\psi}^-_{35}\right\}\bigcup\left\{\ket{\psi}^-_{(0)4},\ket{\psi}^-_{(2)6}\right\}\right\}$ is a UBB in $(\mathbb{C}^3)^{\otimes 3}$; where $\ket{\psi}^-_{(k)l}:=4\ket{\psi}_{kkk}-\ket{\psi(0,0)}_l$.   
\end{proposition}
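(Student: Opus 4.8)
The plan is to check the three requirements of Definition \ref{def1} for the $23$ states of $\mathcal{U}^{AB|C}_{BB}$ in turn: pairwise orthogonality, biseparability of every member, and unextendibility (no biseparable vector in the complementary subspace).

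Biseparability is immediate for the $19$ vectors carried over from $\mathcal{U}^{S}_{PB}$, which are fully product. For the four appended vectors the point is that each is a single product across the $AB|C$ cut, because the two constituents in every pair carry the same Charlie-component: $\ket{\psi(0,0)}_1,\ket{\psi(0,0)}_2$ both carry $\ket{\xi_0}_C$; $\ket{\psi(0,0)}_3,\ket{\psi(0,0)}_5$ both carry $\ket{\eta_0}_C$; $\ket{000},\ket{\psi(0,0)}_4$ share $\ket{0}_C$; and $\ket{222},\ket{\psi(0,0)}_6$ share $\ket{2}_C$. Pulling out the common Charlie-state writes each appended vector as $(\cdots)_{AB}\otimes\ket{\cdot}_C$, so it is $AB|C$-separable, hence biseparable.

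For orthogonality I would use the block structure of Fig.\ref{fig1}. The nine states $\{\ket{\psi(0,0)}_l\}_{l=1}^6\cup\{\ket{kkk}\}_{k=0,1,2}$ are mutually orthogonal (disjoint supports on the cube) and span the space complementary to the $18$ sub-block vectors of $\mathcal{U}^S_{PB}$; all four appended vectors live in this nine-dimensional ``missing'' space and are therefore automatically orthogonal to those $18$ vectors. Among themselves the four have pairwise disjoint supports in the $\ket{\psi(0,0)}_l$'s and $\ket{kkk}$'s, so they are mutually orthogonal, and each is orthogonal to $\ket{S}=\sum_l\ket{\psi(0,0)}_l+\sum_k\ket{kkk}$. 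This last point is exactly where the weight $4$ in $\ket{\psi}^-_{(k)l}$ enters: since $\|\ket{\psi(0,0)}_l\|^2=4$ while $\|\ket{kkk}\|^2=1$, the combination $4\ket{kkk}-\ket{\psi(0,0)}_l$ is annihilated by $\langle S|$.

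The crux is unextendibility: the four-dimensional complementary subspace, which again lies inside the missing space, must contain no biseparable vector. I would parametrise a general vector $\ket{v}$ of this subspace by imposing orthogonality to the five spanned directions $\ket{S},\ket{\psi}^-_{12},\ket{\psi}^-_{35},\ket{\psi}^-_{(0)4},\ket{\psi}^-_{(2)6}$; these relations equate the $\ket{\psi(0,0)}_l$-weights in pairs, tie the $\ket{000},\ket{222}$ weights to those of $\ket{\psi(0,0)}_4,\ket{\psi(0,0)}_6$, and force the $\ket{111}$ weight to a large negative combination of four free parameters. Reading off the resulting coefficients on the block-cube and using that a bipartite pure state is a product across a cut iff its coefficient matrix there has rank one, I would show that the three reshaped matrices (for $A|BC$, $B|AC$, $AB|C$) each have rank at least two unless all parameters vanish. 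This rank argument is the main obstacle, since genuine entanglement requires excluding rank one in all three cuts at once. The block structure makes it tractable: in each cut the rows of the reshaped matrix fall into only two or three distinct types, so the proof reduces to a short case check on which row-types vanish; in every branch, proportionality of the surviving rows together with the $\ket{S}$-induced relation (the $-8,-5$ weights on $\ket{111}$) forces all four parameters to zero. Hence the only ``separable'' candidate is the null vector, the complementary subspace is a GES, and $\mathcal{U}^{AB|C}_{BB}$ is a UBB. A more combinatorial variant, in the spirit of Proposition \ref{prop1}, groups $\mathcal{B}_4$ with the $\ket{000}$-block and $\mathcal{B}_6$ with the $\ket{222}$-block and checks that the single leftover direction $\ket{000}+\ket{\psi(0,0)}_4$ (resp. $\ket{222}+\ket{\psi(0,0)}_6$) is $\not\perp\ket{S}$, ruling out any extension confined to one region; the reshaping computation is what promotes this to the full cross-region statement.
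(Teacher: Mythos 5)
Your proposal is sound and reaches the same conclusion, but it handles the decisive step differently from the paper. The paper's proof is purely combinatorial on the block-cube: it groups the cube into four regions ($\mathcal{B}_1\cup\mathcal{B}_2$, $\mathcal{B}_3\cup\mathcal{B}_5$, $\mathcal{B}_4\cup 000$, $\mathcal{B}_6\cup 222$), observes that the only direction left over in each region is the corresponding ``plus'' state $\ket{\psi}^+_{12},\ket{\psi}^+_{35},\ket{\psi}^+_{(0)4},\ket{\psi}^+_{(2)6}$, notes each is $\not\perp\ket{S}$, and then simply \emph{asserts} that any linear combination of these four states and $\ket{\psi}_{111}$ that is orthogonal to $\ket{S}$ is genuinely entangled; it also records that the set is a UPB in the $AB|C$ cut with a tile structure. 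You reproduce the regional bookkeeping only as a closing remark and instead attack the cross-region statement head-on: you parametrise the $4$-dimensional complement inside the $9$-dimensional missing space (your relations $a_1=a_2$, $a_3=a_5$, $b_0=a_4$, $b_2=a_6$ and the $\ket{111}$ weight $-8a_1-8a_3-5a_4-5a_6$ are exactly right, as is the observation that the factor $4$ in $\ket{\psi}^-_{(k)l}$ exists precisely because $\|\ket{\psi(0,0)}_l\|^2=4$ while $\|\ket{kkk}\|^2=1$), and you propose to exclude rank-one reshapings in all three cuts by a finite case check. Your explicit common-Charlie-factor argument for biseparability of the four appended vectors is also a welcome addition the paper leaves implicit. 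The trade-off: the paper's version is shorter but its key sentence is an unproved claim, whereas your version makes that claim checkable at the cost of a four-parameter rank computation that you describe but do not carry out (and your assertion that the rows fall into ``two or three types'' is optimistic rather than demonstrated). Neither write-up fully executes the genuine-entanglement verification, but your route is the one that would turn it into a complete proof with a few more lines of linear algebra.
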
 
\begin{proof}
	 We have the set of $23$ mutually orthogonal states,
	\begin{equation}
	\mathcal{U}^{AB|C}_{BB}:=\{\mathcal{U}^{S}_{PB},\ket{\psi}^-_{12},\ket{\psi}^-_{35},\ket{\psi}^-_{(0)4},\ket{\psi}^-_{(2)6}\}.
	\end{equation}
	As already discussed in Proposition 2, the state $\ket{\psi}^-_{12}$ spreads over the blocks $\mathcal{B}_1~\&~\mathcal{B}_2$ and six more states from this two blocks have already been appended in $\mathcal{U}^{AB|C}_{BB}$. The only state residing in blocks $\mathcal{B}_1~\&~\mathcal{B}_2$ and which is orthogonal to all these seven states is $\ket{\psi}^+_{12}$, but this state is not orthogonal to $\ket{S}$. Same is true for the state $\ket{\psi}^+_{35}$ spreading over the blocks $\mathcal{B}_3~\&~\mathcal{B}_5$. Now, consider the state $\ket{\psi}^-_{(0)4}$, which spreads over the block $\mathcal{B}_4$ and the basic block $000$. At most five mutually orthogonal states can be kept in $\mathcal{B}_4~\&~ 000$, together. Already, we have kept $4$ states -- $\ket{\psi}^-_{(0)4}$ and three other fully product states. The only state that can be orthogonal to all theses $4$ states is $\ket{\psi}^+_{(0)4}:=\ket{\psi}_{000}+\ket{\psi(0,0)}_4$, but it is non orthogonal to $\ket{S}$ and thus can not be appended to $\mathcal{U}^{AB|C}_{BB}$. Similar reasoning holds true for the state $\ket{\psi}^+_{(2)6}$. The remaining state $\ket{\psi}_{111}$ is again non orthogonal to $\ket{S}$ and any linear combination of the states $\{\ket{\psi}_{111},\ket{\psi}^+_{12},\ket{\psi}^+_{35},\ket{\psi}^+_{(0)4},\ket{\psi}^+_{(2)6}\}$which is $\perp$ to $\ket{S}$ is genuinely entangled.
	
	The set $\mathcal{U}^{AB|C}_{BB}$ is indeed a UPB in AB|C cut. While viewing it as a UPB in AB|C cut, it has a tile structure as shown in Fig.\ref{figt}.
	\begin{figure}[t!]
		\begin{center}
			\includegraphics[width=8cm,height=4cm]{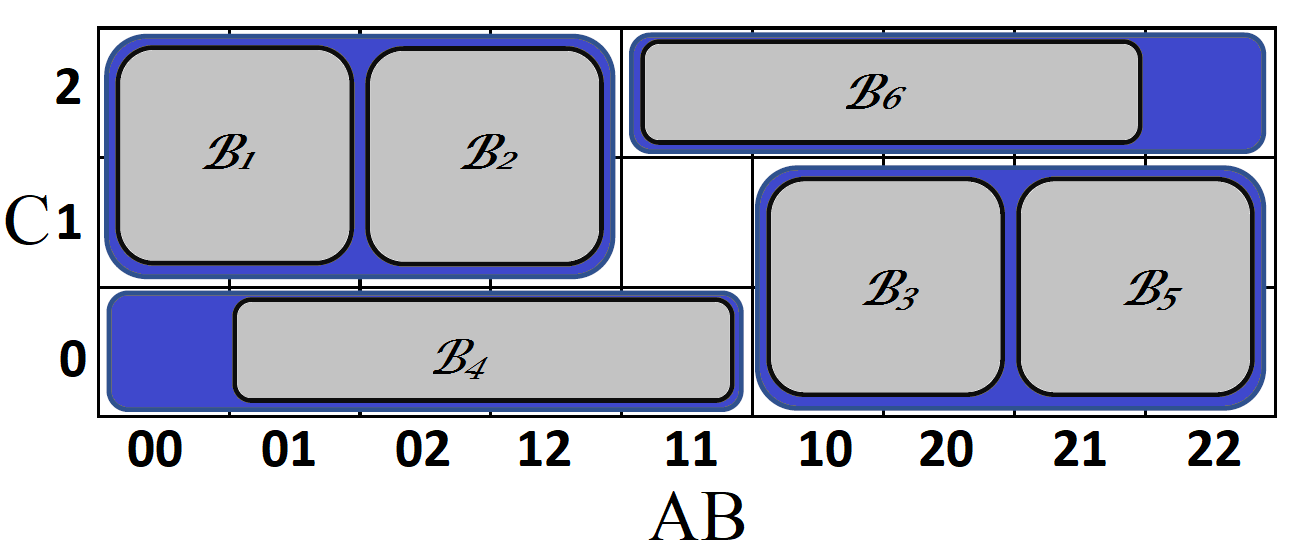}
			\caption[abc]{(Color online) Tile structure of the UPB, $\mathcal{U}^{AB|C}_{BB}$ in AB|C cut. This particular tile structure is a bit similar to that of $\mathbb{C}^3\otimes\mathbb{C}^3$ tile UPB \cite{Bennett99} -- for that look into four blue tiles and the central (white) tile.}
			\label{figt}
		\end{center}
	\end{figure}
	
	In a similar manner one can construct two other UBBs:
	\begin{eqnarray}
	\mathcal{U}^{A|BC}_{BB}&:=&\{\mathcal{U}^{S}_{PB},\ket{\psi}^-_{24},\ket{\psi}^-_{56},\ket{\psi}^-_{(0)1},\ket{\psi}^-_{(2)3}\},\\
	\mathcal{U}^{AC|B}_{BB}&:=&\{\mathcal{U}^{S}_{PB},\ket{\psi}^-_{16},\ket{\psi}^-_{34},\ket{\psi}^-_{(0)5},\ket{\psi}^-_{(2)2}\}
	\end{eqnarray}
	Note that, each state in $\mathcal{U}^{\alpha|\beta}_{BB}\setminus\mathcal{U}^{S}_{PB}$ is $\not\perp$ to at least one of the states in $\mathcal{U}^{\alpha'|\beta'}_{BB}\setminus\mathcal{U}^{S}_{PB}$ , where $\alpha,\alpha'\in\{A,B,C\}$, $\beta,\beta'\in\{BC,AC,AB\}$, and $\alpha\neq\alpha$', $\beta\neq\beta'$.
\end{proof}

Note that $\mathcal{U}^{AB|C}_{BB}$ is indeed a UPB in AB|C cut. Similar constructions are also possible in other two cuts. The remaining $4$ dimensional genuinely entangled subspace $\mathcal{GE}^{AB|C}(4)$ has an important distinct feature than the subspace $\mathcal{GE}(5)$.

\begin{lemma}\label{lemma1}
The three-qutrit density matrix $\rho^{AB|C}_{\mathcal{GE}}(4)$ proportional to the projector on the subspace $\mathcal{GE}^{AB|C}(4)$ is a bound entangled state in AB|C cut, while it is $1$-distillable in other two cuts.
\end{lemma}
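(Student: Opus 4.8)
The statement has two halves, and I would treat them by completely different mechanisms.

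For the bound-entanglement half I would lean on Proposition~\ref{prop2}, which already shows that $\mathcal{U}^{AB|C}_{BB}$ is an honest UPB once the system is read as the bipartition $AB|C$ (treating $AB$ as a single $9$-dimensional party and $C$ as a $3$-dimensional one): all $23$ vectors are mutually orthogonal and product across this cut, and their complement $\mathcal{GE}^{AB|C}(4)$ is genuinely entangled, hence contains no $AB|C$-product vector. Writing $\mathbb{P}_{AB|C}=\sum_i\ketbra{\psi^i}{\psi^i}$ for the rank-$23$ projector onto the UPB, the state in question is $\rho^{AB|C}_{\mathcal{GE}}(4)=\tfrac14\bigl(I-\mathbb{P}_{AB|C}\bigr)$. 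The plan is to run the standard argument that the normalized projector onto a UPB's complement is PPT: partial-transposing on $C$ sends each $\ketbra{\psi^i}{\psi^i}$ to a rank-one product projector $\ketbra{\bar\psi^i}{\bar\psi^i}$ (conjugating only the $C$-factor), and since for every orthogonal pair of product vectors orthogonality is witnessed on at least one party---a property untouched by conjugation on $C$---the $\{\ket{\bar\psi^i}\}$ stay orthonormal. Hence $\bigl(I-\mathbb{P}_{AB|C}\bigr)^{\Gamma_C}=I-\sum_i\ketbra{\bar\psi^i}{\bar\psi^i}\ge 0$, so $\rho^{AB|C}_{\mathcal{GE}}(4)$ is PPT across $AB|C$; it is entangled because its support is exactly the product-free subspace $\mathcal{GE}^{AB|C}(4)$, whereas any separable state has product vectors in its range. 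A PPT-entangled state is undistillable, which settles bound entanglement across $AB|C$.

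For the two remaining cuts I would invoke the rank criterion underlying Lemma~\ref{lemma}. Because $\rho^{AB|C}_{\mathcal{GE}}(4)$ has rank $4$ and the marginal on any single qutrit has rank at most $3$, it suffices to show that the reduced operator on the $9$-dimensional side has rank at least $5$ in each of the cuts $A|BC$ and $AC|B$; then $\mathcal{R}(\rho)=4<5\le\max[\mathcal{R}(\rho_\alpha),\mathcal{R}(\rho_\beta)]$ and $1$-distillability follows. Concretely, I would fix an orthonormal basis of $\mathcal{GE}^{AB|C}(4)$ by intersecting the $5$-dimensional span of the five states identified in the proof of Proposition~\ref{prop2}, namely $\{\ket{\psi}_{111},\ket{\psi}^+_{12},\ket{\psi}^+_{35},\ket{\psi}^+_{(0)4},\ket{\psi}^+_{(2)6}\}$, with the hyperplane $\ket{S}^\perp$, build the projector $\mathbb{P}$, and evaluate $\rho_{BC}=\tr_A[\mathbb{P}]$ and $\rho_{AC}=\tr_B[\mathbb{P}]$, reading off their ranks.

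The delicate step, and the one I expect to be the real obstacle, is the rank lower bound $\mathcal{R}(\rho_{BC}),\mathcal{R}(\rho_{AC})\ge 5$: since no vector of the GES can have Schmidt rank exceeding $3$ across these cuts, the bound cannot come from any single vector and must be assembled from several. I would therefore back up the explicit computation with a counting argument in the spirit of Theorem~\ref{theorem2}, tracking how many independent directions each of the defining vectors $\ket{\psi}^+_{12},\ket{\psi}^+_{35},\ket{\psi}^+_{(0)4},\ket{\psi}^+_{(2)6}$ contributes to the reduced operator on the composite subsystem. The key structural input is precisely the asymmetry that drives the whole lemma: these vectors are \emph{product} across $AB|C$, which is exactly what forced the complement projector to be PPT and hence bound, yet they are \emph{entangled} across both $A|BC$ and $AC|B$, so they inflate the corresponding $9$-dimensional marginals past rank $4$ and open the door to distillation.
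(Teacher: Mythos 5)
Your proposal is correct and follows essentially the same route as the paper: bound entanglement across $AB|C$ via the standard UPB argument (the $23$ vectors of $\mathcal{U}^{AB|C}_{BB}$ are product in that cut, so the complement projector is PPT, and entangled by the range criterion), and $1$-distillability in the other two cuts via the rank criterion of Lemma~\ref{lemma} applied to the explicitly computed marginals (the paper finds $\mathcal{R}(\rho_{BC})=\mathcal{R}(\rho_{AC})=7>4$ in Appendix~\ref{appen2}). Your supplementary counting argument is a reasonable sanity check but is not needed once the marginals are computed.
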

\begin{proof}
The state $\rho^{AB|C}_{\mathcal{GE}}(4)$ is bound entangled across AB|C cut follows from the fact that it is a PPT entangled state in this cut. Distillability in other two cuts follows from the marginal rank condition (see Appendix \ref{appen2}).
\end{proof}
Till now we have proposed construction of GESs from UBB. Whether it is possible to construct a GES directly from a UPB such that not even biseparable state can be appended to the set remains an interesting open question. Though such a {\it stronger} UPB is not possible in $(\mathbb{C}^2)^{\otimes m}$ \cite{DiVincenzo00}, its existence in higher dimension will sufficiently lead to a genuinely entangled state which is PPT entangled across every bipartition. However, in the following lemma, we state an interesting property of the present UPB. 
\begin{lemma}\label{lemma2}
The three-qutrit density matrix $\rho(8)$ proportional to the projector on the subspace $\mathcal{CE}(8)$ is bound entangled in every bipartite cut.
\end{lemma}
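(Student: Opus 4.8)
**The plan is to show that $\rho(8)$, the normalized projector onto $\mathcal{CE}(8)$, is both entangled and non-distillable (hence bound entangled) across each of the three bipartitions $A|BC$, $B|AC$, and $AB|C$.** Since $\mathcal{U}^{S}_{PB}$ is a genuine UPB in $(\mathbb{C}^3)^{\otimes 3}$ (Proposition~\ref{prop}), its complementary subspace $\mathcal{CE}(8)$ contains no product vector, so in particular it contains no vector that is product across any fixed cut. This immediately makes $\rho(8)$ entangled in every bipartition, as the range of $\rho(8)$ is exactly $\mathcal{CE}(8)$ and a separable state would need its range spanned by product vectors. The remaining and genuinely substantive task is to establish \emph{non-distillability} across each cut.

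\emph{First} I would establish the PPT property in each bipartition, since positivity of the partial transpose is the standard certificate of non-distillability. Fix a cut, say $A|BC$, and regroup the t-OPB states of Eq.~(\ref{tCOPB}) as vectors in $\mathbb{C}^3 \otimes \mathbb{C}^9$. The key observation is that $\mathcal{U}^{S}_{PB}$, viewed across this cut, is built from product vectors, so the projector $\mathbb{P}_{\mathcal{U}}$ onto its span decomposes into orthogonal ``tile''-like blocks dictated by the block-cube geometry of Fig.~\ref{fig1}. I would compute the partial transpose $\rho(8)^{T_A} = \tfrac{1}{8}(\mathbb{I} - \mathbb{P}_{\mathcal{U}})^{T_A}$ and verify $\rho(8)^{T_A}\ge 0$. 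The most efficient route is to exploit the cyclic symmetry already noted after Proposition~\ref{prop}: because $\mathcal{U}^{S}_{PB}$ (and hence $\rho(8)$) is invariant under cyclic permutation of the parties, it suffices to check PPT in one cut and the other two follow by symmetry.

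\emph{Then}, to confirm PPT explicitly, I would assemble the $19$ product vectors of $\mathcal{U}^{S}_{PB}$ as $\ket{a_k}_A\otimes\ket{b_k}_{BC}$, write $\mathbb{P}_{\mathcal{U}}^{T_A}=\sum_{k}(\ketbra{a_k}{a_k})^{T}\otimes\ketbra{b_k}{b_k}$ together with the cross terms from non-orthonormality, and reduce the eigenvalue check to the small blocks induced by the geometry. Concretely, I expect $\rho(8)^{T_A}$ to be block-diagonal in the computational basis labelling of the block-cube, with each block supported on a handful of basic blocks, so that positivity reduces to checking a few low-dimensional matrices by hand. This is where the cyclic symmetry plus the face-and-diagonal structure of Fig.~\ref{fig1} does the heavy lifting, keeping the computation finite and transparent.

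\emph{The main obstacle} is the explicit positivity verification of the partially transposed matrix: unlike a genuine bipartite UPB construction where PPT-ness of the complementary projector is automatic, here $\ket{S}$ spreads over the entire cube and is entangled across every cut, so $\mathbb{P}_{\mathcal{U}}$ is \emph{not} a sum of product projectors across any single bipartition, and the naive ``UPB $\Rightarrow$ PPT'' argument fails. I would therefore have to isolate the contribution of $\ket{S}$ and the partner states $\ket{\psi(0,0)}_l$ carefully, showing that although these terms introduce off-diagonal elements under partial transposition, the resulting blocks remain positive semidefinite. I expect this to be the crux of the argument and would relegate the block-by-block eigenvalue bookkeeping to an appendix, invoking the cyclic symmetry to cover all three cuts at once; the entanglement (non-separability) half then follows immediately from the UPB property of Proposition~\ref{prop}.
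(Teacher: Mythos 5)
Your proposal has the difficulty of the two halves exactly inverted, and both halves as written contain a genuine error. First, the entanglement half is not immediate from Proposition~\ref{prop} as you claim. A multipartite UPB only guarantees that $\mathcal{CE}(8)$ contains no \emph{fully} product vector; it does not follow that $\mathcal{CE}(8)$ contains no vector that is product across a fixed bipartition, and indeed it does contain such vectors --- $\ket{\psi}^-_{12}=\ket{\psi(0,0)}_1-\ket{\psi(0,0)}_2$ lies in $\mathcal{CE}(8)$ and is separable in the $AB|C$ cut (this is precisely how Propositions~\ref{prop1} and~\ref{prop2} extend the UPB to UBBs). So your range-criterion argument, in the form ``the range contains no cut-product vector,'' fails. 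The correct route, which is the one the paper takes, is a counting (``deficit'') argument: in the $AB|C$ cut there are only four linearly independent $AB|C$-separable vectors orthogonal to $\operatorname{span}(\mathcal{U}^{S}_{PB})$, namely the four states appended in Proposition~\ref{prop2}, so the eight-dimensional range of $\rho(8)$ cannot be spanned by $AB|C$-product vectors, and the range criterion then gives entanglement in that cut; cyclic symmetry handles the other two cuts.

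Second, the PPT half, which you identify as the crux requiring block-by-block eigenvalue bookkeeping, is actually trivial, and the premise of your ``main obstacle'' is false: $\ket{S}=(\ket{0}+\ket{1}+\ket{2})^{\otimes 3}$ is a \emph{fully product} state (it spreads over every basic block of the cube, but it is still a tensor product of three local vectors). Hence every element of $\mathcal{U}^{S}_{PB}$ is a fully product vector, the nineteen of them are mutually orthogonal, and $\mathbb{P}_{\mathcal{U}}$ \emph{is} a sum of mutually orthogonal product projectors across every bipartition. The standard UPB argument then applies verbatim: partial transposition maps this set to another set of mutually orthogonal product projectors (here, since all amplitudes are real, it leaves $\mathbb{P}_{\mathcal{U}}$ invariant), so $\rho(8)^{T_A}=\tfrac{1}{8}\left(\mathbb{I}-\mathbb{P}_{\mathcal{U}}\right)^{T_A}$ is again proportional to a projector and is manifestly positive. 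No explicit positivity verification is needed. In short: the PPT part you worried about is automatic, and the entanglement part you dismissed as immediate is where the actual argument lives.
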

\begin{proof}
	Since $\mathcal{U}^{S}_{PB}$ is a multipartite UPB, the state $\rho(8)$ is PPT in every cut. However, $\mathcal{U}^{S}_{PB}$ is not a UPB while viewed in a particular bipartition and hence entanglement in that bipartition will not follow directly. But, from the block structure it follows that in a particular cut (say AB|C) there exist only four linearly independent biseparable states -- precisely the four states appended with the set $\mathcal{U}^{S}_{PB}$ to construct the asymmetric UBB in Proposition 3 -- that are orthogonal to the subspace spanned by the set of vectors in $\mathcal{U}^{S}_{PB}$. Thus in the complement subspace we have biseparable state (in AB|C cut) {\it deficit} \cite{Horodecki97,Horodecki98}, which further implies that the state in entangled in that cut. Similar reasoning holds true in other two cuts.
\end{proof}
To the best of our knowledge, all previously known multipartite UPBs \cite{Bennett99,Bennett99(1),Alon01,DiVincenzo03,Parthasarathy04,Hayden06,Niset06,Feng06,Cubitt08,Walgate08,Augusiak11,Childs13,Johnston14,Chen15} are completable in bipartite cut in the sense that allowing separable states (across some cut) one can construct a complete orthogonal bases. Therefore Lemma \ref{lemma2} is a typical feature of the present UPB. Although examples of multipartite states that are PPT across all bipartitions are known \cite{Piani07,Toth09,Huber14,Lockhart18}, the present one is the first example that results from a multipartite UPB. 

Using the structural elegance we generalize the above constructions in $(\mathbb{C}^d)^{\otimes 3}$, with $d\ge 4$. In the next two section we provide the explicit construction for $d=4$ and $d=5$ respectively and then give the generalization for arbitrary dimension. We only provide the constructions of t-OPB, UPB and UBBs.  All the Lemmas, Propositions, and the Theorem proven for $(\mathbb{C}^3)^{\otimes 3}$ also hold true in higher dimensions. 

\section{Constructions in $(\mathbb{C}^4)^{\otimes 3}$}
The twisted OPB (see Fig.\ref{fig2}), in this case, is given by, 
\begin{subequations}\label{tCOPB4}
	\begin{align}
	\mathcal{B}_0:&=\{\ket{\psi}_{kkk}\equiv\ket{k}_A\otimes\ket{k}_B\otimes\ket{k}_C~|~k=0,3\},\\
	\mathcal{B}'_0:&=\{\ket{\psi(l,m,p)}\equiv\ket{\phi_l}_A\otimes\ket{\phi_m}_B\otimes\ket{\phi_p}_C\},\\
	\mathcal{B}_1:&=\{\ket{\psi(i,j)}_1\equiv\ket{0}_A\otimes\ket{\eta_i}_B\otimes\ket{\xi_j}_C\},\\
	\mathcal{B}_2:&=\{\ket{\psi(i,j)}_2\equiv\ket{\eta_i}_A\otimes\ket{3}_B\otimes\ket{\xi_j}_C\},\\
	\mathcal{B}_3:&=\{\ket{\psi(i,j)}_3\equiv\ket{\xi_j}_A\otimes\ket{0}_B\otimes\ket{\eta_i}_C\},\\
	\mathcal{B}_4:&=\{\ket{\psi(i,j)}_4\equiv\ket{\xi_j}_A\otimes\ket{\eta_i}_B\otimes\ket{3}_C\},\\
	\mathcal{B}_5:&=\{\ket{\psi(i,j)}_5\equiv\ket{2}_A\otimes\ket{\xi_j}_B\otimes\ket{\eta_i}_C\},\\
	\mathcal{B}_6:&=\{\ket{\psi(i,j)}_6\equiv\ket{\eta_i}_A\otimes\ket{\xi_j}_B\otimes\ket{0}_C\},
	\end{align}
\end{subequations} 
where $l,m,p\in\{0,1\}$, $\ket{\phi_0}:=\ket{1}+\ket{2}$, $\ket{\phi_1}:=\ket{1}-\ket{2}$; $i,j\in\{0,1,2\}$, $\ket{\eta_0}:=\ket{0}+\ket{1}+\ket{2}$ and $\ket{\eta_1},~\ket{\eta_2}$ are the linear
combination of $\{\ket{0},\ket{1},\ket{2}\}$ such that the set of vectors $\{\ket{\eta_i}\}_{i=1}^3$ are pairwise orthogonal, $\ket{\xi_0}:=\ket{1}+\ket{2}+\ket{3}$ and $\ket{\xi_1},~\ket{\xi_2}$ are the linear
combination of $\{\ket{1},\ket{2},\ket{3}\}$ such that $\{\ket{\xi_j}\}_{j=1}^3$ are pairwise orthogonal. 
\begin{figure}[t!]
	\begin{center}
		\includegraphics[width=6cm,height=4.5cm]{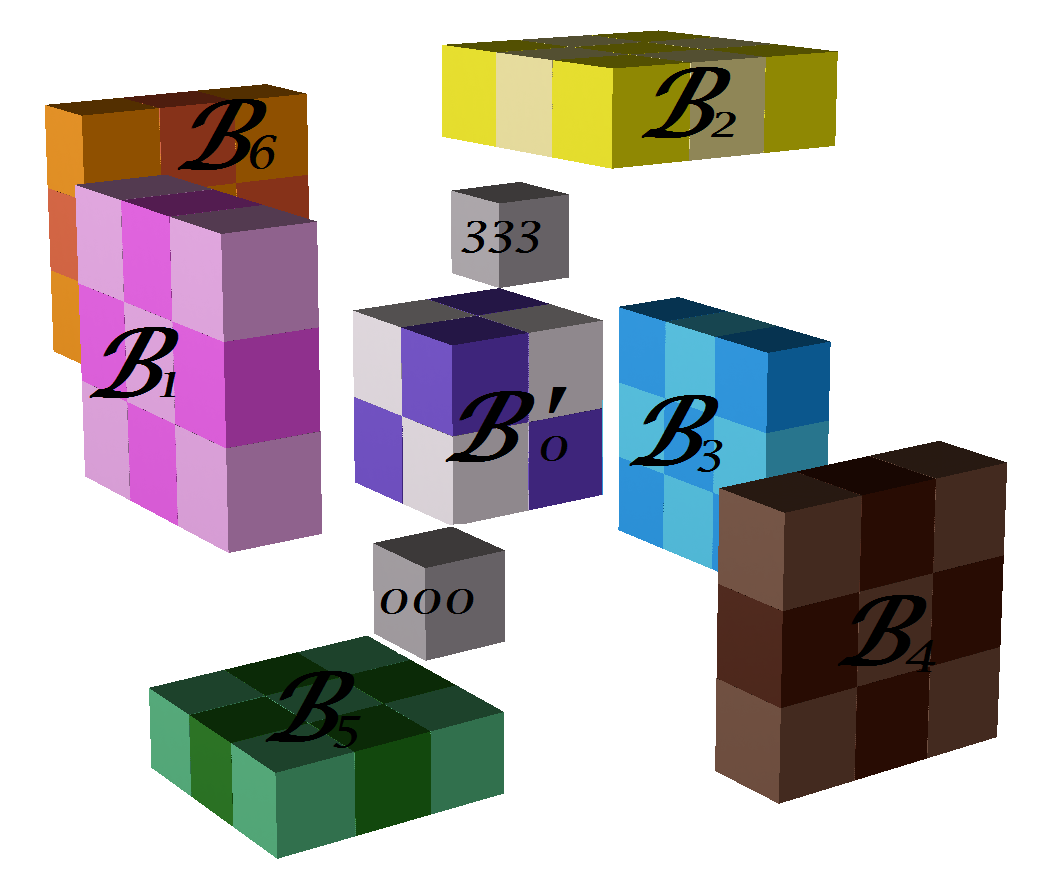}
		\caption{(Color online) The block structure for $(\mathbb{C}^4)^{\otimes 3}$. The states of the t-OPB (\ref{tCOPB4}) are placed in the corresponding Blocks.}
		\label{fig2}
	\end{center}
\end{figure}

In the following we provide the constructions of UPB, symmetric and asymmetric UBBs. Proof follows from the corresponding block structure and the similar kind of arguments discussed for $(\mathbb{C}^3)^{\otimes 3}$.  

{\bf UPB}: In this case, the stopper state is given by, $\ket{S}:=\ket{0+1+2+3}^{\otimes3}$, where $\ket{0+1+2+3}:=\ket{0}+\ket{1}+\ket{2}+\ket{3}$. The state $\ket{S}$ is not orthogonal to the,  states in the block $\mathcal{B}_0$, the state $\ket{\psi(0,0,0)}\in\mathcal{B}'_0$, and the states $\ket{\psi(0,0)}_l\in\mathcal{B}_l$, $\forall~l\in\{1,\cdots,6\}$ of the $\mathbb{C}^4\otimes\mathbb{C}^4\otimes\mathbb{C}^4$ t-OPB (\ref{tCOPB4}). Thus the UPB is given by, $\mathcal{U}^{[4]}_{PB}:=\left\{\cup_{l=1}^6\left\{\mathcal{B}_l\setminus\ket{\psi(0,0)}_l\right\}\cup\left\{\mathcal{B}'_0\setminus\ket{\psi(0,0,0)}\right\}\cup\ket{S}\right\}$.
Superscript (in square brace) denotes the local dimension. In this case the cardinality of $\mathcal{U}^{[4]}_{PB}$ is $56$ leading to a $8$ dimensional CE subspace. 

{\bf UBB (symmetric)}: Appending biseparable states, symmetrically in three cuts, along with $\mathcal{U}^{[4]}_{PB}$  we obtain the symmetric UBB $
\mathcal{U}^{S[4]}_{BB}:=\left\{\mathcal{U}^{[4]}_{PB}\cup\ket{\psi}^-_{12}\cup\ket{\psi}^-_{34}\cup\ket{\psi}^-_{56}\right\}$,
where $\ket{\psi}^-_{12}:=\ket{\psi(0,0)}_1-\ket{\psi(0,0)}_2$; and $\ket{\psi}^-_{34},~\ket{\psi}^-_{56}$ are defined analogously. The resulting genuinely entangled subspace turns out to be $5$ dimensional. The set of missing states are $\mathcal{M}:=\{\{\ket{kkk}\}_{k=0,3}\bigcup\ket{\psi(0,0,0)}\bigcup\{\ket{\psi}^+_{i(i+1)}\}_{i=1,3,5}\}$. Since the missing states posses the same features as stated in the proof of Theorem \ref{theorem2} so the theorem also holds true for this construction. Similar holds true for the higher dimensional constructions.

{\bf UBB (asymmetric)}: The asymmetric UBB, while biseparable states are appended in AB|C cut, is given by $
\mathcal{U}^{AB|C[4]}_{BB}:=\left\{\mathcal{U}^{[4]}_{PB},\ket{\psi}^-_{12},\ket{\psi}^-_{35},\ket{\psi}^-_{(0)4},\ket{\psi}^-_{(3)6}\right\};$
where $\ket{\psi}^-_{(0)4}:=9\ket{\psi}_{000}-\ket{\psi(0,0)}_4$, $\ket{\psi}^-_{(3)6}:=9\ket{\psi}_{333}-\ket{\psi(0,0)}_6$.

\section{Constructions in $(\mathbb{C}^5)^{\otimes 3}$}
\begin{figure}[b!]
	\begin{center}
		\includegraphics[width=6cm,height=5.5cm]{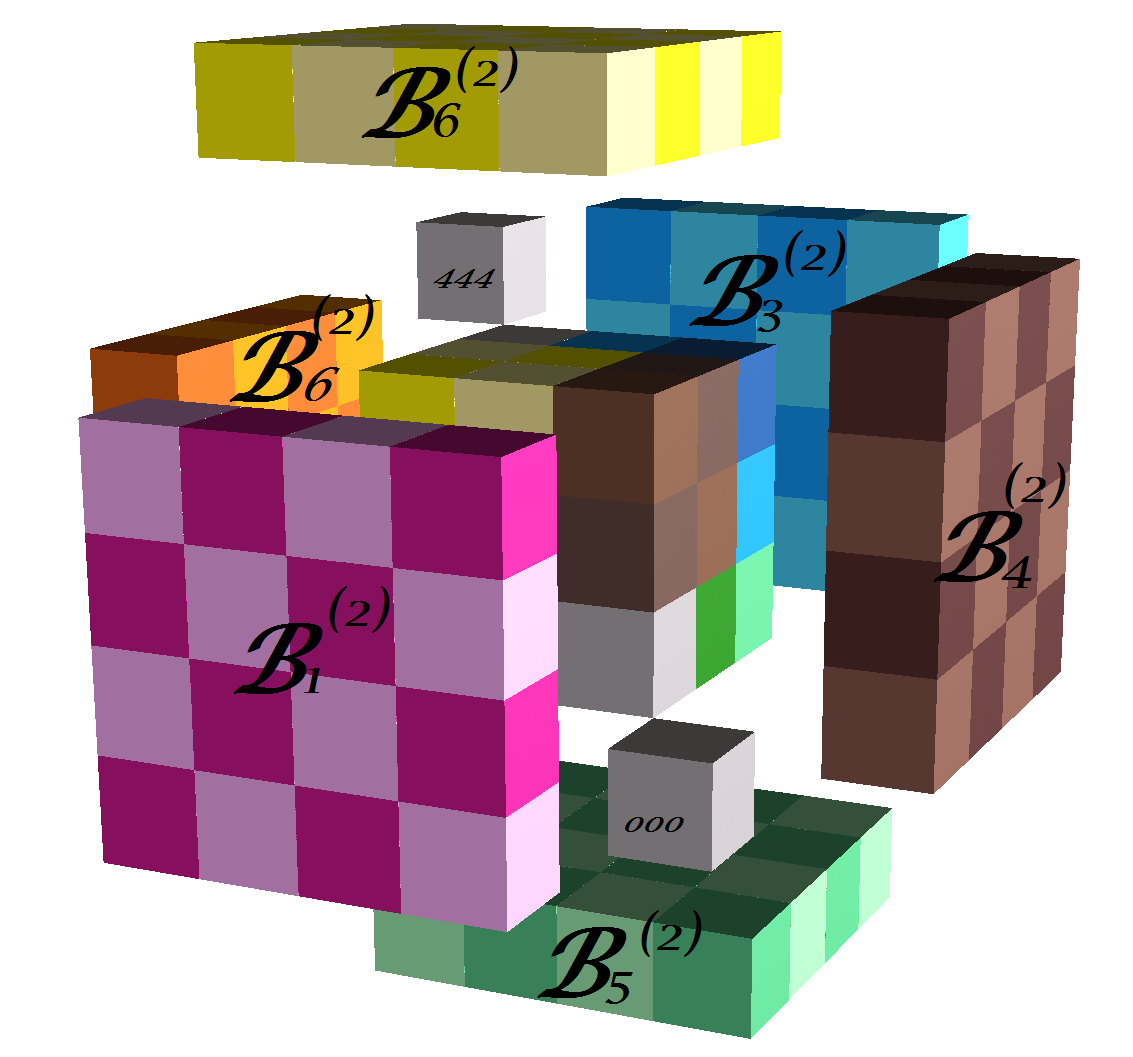}
		\caption{(Color online) The block structure for $(\mathbb{C}^5)^{\otimes 3}$. The states of the t-OPB (\ref{tCOPB5}) are placed in the corresponding Blocks.}
		\label{fig4}
	\end{center}
\end{figure}
Consider the computational bases $\{|i\rangle_A\otimes|j\rangle_B\otimes|k\rangle_C~|~i,j,k=0,1,2,3,4\}$ for a tripartite Hilbert space $\mathbb{C}^5\otimes\mathbb{C}^5\otimes\mathbb{C}^5$. for the same Hilbert space let us consider the following {\it twisted}-OPB arranged in the {\it bolck-cube} of size $5\times 5\times 5$ (see Fig.\ref{fig4}), 
\begin{subequations}\label{tCOPB5}
	\begin{align}
	\mathcal{B}_0:=\{\ket{\psi}_{kkk}\equiv\ket{k}_A\otimes\ket{k}_B\otimes\ket{k}_C|k\in\{0,\cdots,4\}\},\\
	\begin{cases}
	\mathcal{B}^{(1)}_1:&=\{\ket{\psi(i_1,j_1)}_{1}^{(1)}\equiv\ket{1}_A\otimes\ket{\eta_{i_1}}_B\otimes\ket{\xi_{j_1}}_C\},\\
	\mathcal{B}^{(1)}_2:&=\{\ket{\psi(i_1,j_1)}_{2}^{(1)}\equiv\ket{\eta_{i_1}}_A\otimes\ket{3}_B\otimes\ket{\xi_{j_1}}_C\},\\
	\mathcal{B}^{(1)}_3:&=\{\ket{\psi(i_1,j_1)}_{3}^{(1)}\equiv\ket{3}_A\otimes\ket{\xi_{j_1}}_B\otimes\ket{\eta_{i_1}}_C\},\\
	\mathcal{B}^{(1)}_4:&=\{\ket{\psi(i_1,j_1)}_{4}^{(1)}\equiv\ket{\eta_{i_1}}_A\otimes\ket{\xi_{j_1}}_B\otimes\ket{1}_C\},\\
	\mathcal{B}^{(1)}_5:&=\{\ket{\psi(i_1,j_1)}_{5}^{(1)}\equiv\ket{\xi_{j_1}}_A\otimes\ket{1}_B\otimes\ket{\eta_{i_1}}_C\},\\
	\mathcal{B}^{(1)}_6:&=\{\ket{\psi(i_1,j_1)}_{6}^{(1)}\equiv\ket{\xi_{j_1}}_A\otimes\ket{\eta_{i_1}}_B\otimes\ket{3}_C\},\\
	\end{cases}\\
	\begin{cases}
	\mathcal{B}^{(2)}_1:&=\{\ket{\psi(i_2,j_2)}_{1}^{(2)}\equiv\ket{0}_A\otimes\ket{\eta_{i_2}}_B\otimes\ket{\xi_{j_2}}_C\},\\
	\mathcal{B}^{(2)}_2:&=\{\ket{\psi(i_2,j_2)}_{2}^{(2)}\equiv\ket{\eta_{i_2}}_A\otimes\ket{4}_B\otimes\ket{\xi_{j_2}}_C\},\\
	\mathcal{B}^{(2)}_3:&=\{\ket{\psi(i_2,j_2)}_{3}^{(2)}\equiv\ket{4}_A\otimes\ket{\xi_{j_2}}_B\otimes\ket{\eta_{i_2}}_C\},\\
	\mathcal{B}^{(2)}_4:&=\{\ket{\psi(i_2,j_2)}_{4}^{(2)}\equiv\ket{\eta_{i_2}}_A\otimes\ket{\xi_{j_2}}_B\otimes\ket{0}_C\},\\
	\mathcal{B}^{(2)}_5:&=\{\ket{\psi(i_2,j_2)}_{5}^{(2)}\equiv\ket{\xi_{j_2}}_A\otimes\ket{0}_B\otimes\ket{\eta_{i_2}}_C\},\\
	\mathcal{B}^{(2)}_6:&=\{\ket{\psi(i_2,j_2)}_{6}^{(2)}\equiv\ket{\xi_{j_2}}_A\otimes\ket{\eta_{i_2}}_B\otimes\ket{4}_C\},
	\end{cases}
	\end{align}
\end{subequations}
where $i_1,j_2\in\{0,1\}$, and $\ket{\eta_{i_1}},~\ket{\xi_{j_1}}$ are analogous to $\ket{\eta_{i}},~\ket{\xi_{j}}$ of Eq.(\ref{tCOPB}) with $0$ and $1$ replaced by $1$ and $2$, respectively; 
$i_2,j_2\in\{0,1,2,3\}$, $\ket{\eta_{i_2=0}}:=\ket{0+1+2+3}$ and other $\ket{\eta_{i_2}}$ are the linear
combination of $\{\ket{l}\}_{l=0}^3$ with further requirement that the set of vectors $\{\ket{\eta_{i_2}}\}_{i_2=0}^3$ are pairwise orthogonal, $\ket{\xi_{j_2=0}}:=\ket{1+2+3+4}$ and other $\ket{\xi_{j_2}}$'s are the linear
combination of $\{\ket{l}\}_{l=1}^4$ and the set of vectors $\{\ket{\xi_{j_2}}\}_{j_2=0}^3$ are pairwise orthogonal.

Here the states in $\mathcal{B}_0$ are kept in the body-diagonal basic blocks. The facial blocks appears in two layers -- the superscript (within parentheses) on $\mathcal{B}$ in Eqs.(\ref{tCOPB5}) denotes the layer index, $(l)$, with $l=1$ denoting the inner layer and $l=2$ the outer. Each facial block in inner and outer layer consists of $4$ and $16$ basic blocks respectively. 

{\bf UPB}: In the case  the UPB turns out to be $\mathcal{U}^{[5]}_{PB}:=\left\{\bigcup_{p=1,l=1}^{6,2}\left\{\mathcal{B}^{(l)}_p\setminus\ket{\psi(0,0)}^{(l)}_p\right\}\bigcup\ket{S}\right\}$, where $\ket{S}:=\ket{0+1+2+3+4}^{\otimes 3}$. Here, $\mathcal{C}(\mathcal{U}^{[5]}_{PB})=109$ and the dimension of the CE subspace is $16$; $\mathcal{C}$ denotes cardinality.

{\bf UBB {symmetric}}: Two (one from each layer) biseparable states are added in each cut along with the set $\mathcal{U}^{[5]}_{PB}$ to obtain the symmetric UBB $\mathcal{U}^{S[5]}_{BB}:=\left\{\mathcal{U}^{[5]}_{PB}\bigcup_{l=1}^2\left\{\ket{\psi}^{(l)-}_{12},\ket{\psi}^{(l)-}_{34},\ket{\psi}^{(l)-}_{56}\right\}\right\}$; where $\ket{\psi}^{(l)-}_{ij}:=\ket{\psi(0,0)}^{(l)}_i-\ket{\psi(0,0)}^{(l)}_j$. Dimension of the GES is $10$ and a orthonormal basis for this subspace can be constructed from the linear combination of  missing states $\mathcal{M}:=\left\{\left\{\ket{kkk}\right\}_{k=0,...,4} \bigcup_{l=1}^2\left\{\ket{\psi}^{l+}_{ij}\right\}_{i=1,3,5}\right\}$, where $j=i+1$. 

Generalization of the construction for arbitrary local dimension is presented in Appendix \ref{appen3}. 


\section{Discussions} 
Understanding multipartite entanglement is quite demanding as quantum advantages in several real-world tasks can only ever be achieved by exploiting multipartite entanglement as a resource \cite{Hillery99,Raussendorf01,Komer14,Pirker18}; and it also has implications in other branches of physics \cite{Amico08,Facchi09,Rangamani17}. The notion of UBB studied here is important as it sufficiently leads to a subspace containing only genuinely entangled states. Our symmetric UBB leads to a subspace that is not only a GES but also bidistillable, which makes it operationally more useful. Here the question arises whether construction of a multipartite subspace is possible which is not only bidistillable but also multipartite distillable, i.e., any kind of pure genuinely entangled state can be distilled from every state supported on that subspace under local operation and classical communications (LOCC). In particular, subspace of a tripartite Hilbert space will be operationally most useful if sharing identical copies of an arbitrary state $\rho_{ABC}$ supported on that subspace one can distill two-qubit maximally entangled states (e.g. singlet state) between any two parties, i.e., $\rho_{ABC}^{\otimes m}\xrightarrow[]{LOCC}\ket{\psi^-}_{AB}^{n_{1}}\otimes\ket{\psi^-}_{BC}^{n_{2}}\otimes\ket{\psi^-}_{CA}^{n_{3}}$. It will be interesting to check whether $\mathcal{GE}(5)$ is of that kind. Our asymmetric UBB (Proposition \ref{prop2}) indeed turns out to be a UPB in a bipartition. Though such a UBB is not possible in $(\mathbb{C}^2)^{\otimes 3}$ \cite{DiVincenzo00}, but possibility of constructing a UBB in $(\mathbb{C}^2)^{\otimes 3}$ is not ruled out in general; and we leave this here as an open question for future research.

Our study raises a number of other interesting questions. For the present construction the proof that the given set of states forms a UBB follows from the elegant geometric structure of block cube. However, it is not the case in general and therefore demands more efficient method to determine whether an arbitrary given set of orthogonal vectors forms a UBB or not. Furthermore, for a given multipartite Hilbert space, it is also important to know the maximum achievable dimension of GES which is bidistillable. An upper bound follows from the result of Ref. \cite{Johnston13} which turns out to be $(d^3-d^2-d+1)$ for the Hilbert space $(\mathbb{C}^d)^{\otimes 3}$. Finally, the notion of UPB is deeply linked with the notion called `quantum nonlocality without entanglement' \cite{Bennett99(1)}. It will be quite interesting to explore such new concepts that result from the UBB introduced here. For example in \cite{Halder18(1)}, the authors introduced a new concept called `strong quantum nonlocality without entanglement'. While we have checked that the present UPB does not exhibit such phenomena, it will be interesting to study whether a UBB can exhibit that. Generalizing the present construction for more number of parties is also an important direction to explore \cite{self}.  

\begin{acknowledgments}
	The authors gratefully acknowledge useful discussions with Guruprasad Kar, Sibasish ghosh, and Somshubhro Bandyopadhyay. SA would like to acknowledge discussions with Prasanta K. Panigrahi. Private communication with Otfried Gühne is gratefully acknowledged. MB acknowledges support through an INSPIRE-faculty position at S. N. Bose National Centre for Basic Sciences, by the Department of Science and Technology, Government of India.
\end{acknowledgments}  

\appendix
\begin{widetext}
\section{Normalized projector on $\mathcal{GE}(5)$}\label{appen1}
The state $\rho^{S}_{\mathcal{GE}}(5)$ proportional to the projector on subspace $\mathcal{GE}(5)$ is given by,   
\begin{eqnarray}
\rho^{S}_{\mathcal{GE}}(5):=\frac{1}{5}\left(\mathbb{I}_3\otimes\mathbb{I}_3\otimes\mathbb{I}_3-\sum_{\ket{\psi}\in\mathcal{U}^{S}_{BB}}\ket{\tilde{\psi}}\bra{\tilde{\psi}}\right).
\end{eqnarray}
Here $\ket{\tilde{\psi}}$ the normalized state proportional to $\ket{\psi}$. Since the construction is party symmetric, all the two party reduced states $\rho_{\beta}:=\mbox{Tr}_{\alpha}[\rho^{S}_{\mathcal{GE}}(5)]$, with $\beta\in\{BC,CA,AB\}$ and $\alpha\in\{A,B,C\}$, respectively, are identical and the corresponding density matrix takes the following form:
\begin{equation}
\rho_{\beta}
=\frac{1}{360}\left(
\begin{array}{ccccccccc}
82 & 10 & 10 & -8 & -8 & 10 & -8 & -8 & -8 \\
10 & 19 & 19 & -8 & 1 & 19 & -8 & 1 & 1
\\
10 & 19 & 19 & -8 & 1 & 19 & -8 & 1 & 1
\\
-8 & -8 & -8 & 19 & 1 & -8 & 19 & 1 & -8
\\
-8 & 1 & 1 & 1 & 82 & 1 & 1 & 10 & 1 \\
10 & 19 & 19 & -8 & 1 & 19 & -8 & 1 & 1
\\
-8 & -8 & -8 & 19 & 1 & -8 & 19 & 1 & -8
\\
-8 & 1 & 1 & 1 & 10 & 1 & 1 & 19 & 10 \\
-8 & 1 & 1 & -8 & 1 & 1 & -8 & 10 & 82 \\
\end{array}
\right).
\end{equation}
Rank of the state $\rho_{\beta}$ is $6$.

\section{Normalized projector on $\mathcal{GE}(4)$}\label{appen2}
The state $\rho^{AB|C}_{\mathcal{GE}}(4)$ proportional to the projector on subspace $\mathcal{GE}(4)$ is given by,   
\begin{eqnarray}
\rho^{AB|C}_{\mathcal{GE}}(4):=\frac{1}{4}\left(\mathbb{I}_3\otimes\mathbb{I}_3\otimes\mathbb{I}_3-\sum_{\ket{\psi}\in\mathcal{U}^{AB|C}_{BB}}\ket{\tilde{\psi}}\bra{\tilde{\psi}}\right).
\end{eqnarray}
Since all $\ket{\psi}\in\mathcal{U}^{AB|C}_{BB}$ are separable in AB|C cut, partial transpose in this bipartite cut gives a non-negative operator; in fact in this case the state is invariant under partial transposition in AB|C cut. However, in other two cuts the state is NPT. In the following we note down BC and AC marginals of the state: 
\begin{equation}
\rho_{BC}=\frac{1}{1440}\left(
\begin{array}{ccccccccc}
122 & 50 & -40 & 77 & 5 & -40 & 77 & 5 & -40 \\
50 & 95 & 5 & 5 & 50 & 5 & 5 & 50 & 5 \\
-40 & 5 & 149 & -40 & 5 & 149 & -40 & 5 & 77 \\
77 & 5 & -40 & 149 & 5 & -40 & 149 & 5 & -40 \\
5 & 50 & 5 & 5 & 410 & 5 & 5 & 50 & 5 \\
-40 & 5 & 149 & -40 & 5 & 149 & -40 & 5 & 77 \\
77 & 5 & -40 & 149 & 5 & -40 & 149 & 5 & -40 \\
5 & 50 & 5 & 5 & 50 & 5 & 5 & 95 & 50 \\
-40 & 5 & 77 & -40 & 5 & 77 & -40 & 50 & 122 \\
\end{array}
\right);
\end{equation} 
\begin{equation}
\rho_{AC}=\frac{1}{1440}\left(
\begin{array}{ccccccccc}
176 & -40 & -40 & 104 & -40 & -40 & -40 & -40 & -40 \\
-40 & 95 & 95 & -40 & 5 & 5 & -40 & -40 & -40 \\
-40 & 95 & 95 & -40 & 5 & 5 & -40 & -40 & -40 \\
104 & -40 & -40 & 149 & 5 & -40 & 5 & 5 & -40 \\
-40 & 5 & 5 & 5 & 410 & 5 & 5 & 5 & -40 \\
-40 & 5 & 5 & -40 & 5 & 149 & -40 & -40 & 104 \\
-40 & -40 & -40 & 5 & 5 & -40 & 95 & 95 & -40 \\
-40 & -40 & -40 & 5 & 5 & -40 & 95 & 95 & -40 \\
-40 & -40 & -40 & -40 & -40 & 104 & -40 & -40 & 176 \\
\end{array}
\right).
\end{equation}
It turns out that $\mathcal{R}(\rho_{BC})=\mathcal{R}(\rho_{AC})=7$, which further implies that the state $\rho^{AB|C}_{\mathcal{GE}}(4)$ is $1$-distillable in A|BC and AC|B cuts.

\section{Constructions in $(\mathbb{C}^d)^{\otimes 3}$}\label{appen3}

{\bf Odd dimension(s)}: First we consider odd dimensional case, i.e., $d=(2n+1)$, $n$ being positive integers. It is clear from the construction $d=5$, for arbitrary $d$, we have $(d-1)/2$ layers of facial blocks. The t-OPB is given by,   
\begin{subequations}\label{tCOPBd}
	\begin{align}\nonumber
	\mathcal{B}_0:=\{\ket{\psi}_{kkk}\equiv\ket{k}_A\otimes\ket{k}_B\otimes\ket{k}_C|k\in\{0,\cdots,(d-1)\}\},\nonumber\\
	\begin{cases}\nonumber
	...&~~~~~~~~~~~~~~~~~~~~~~~~~~~~~~~~~~~~~~~~~~~~~~~~~~~~~~~~~~~~~~~~~~~\\
	...&~~~~~~~~~~~~~~~~~~~~~~~~~~~~~~~~~~~~~~~~~~~~~~~~~~~~~~~~~~~~~~~~~~~\\
	\end{cases}\\
	...~~~~~~~~~~~~~~~~~~~~~~~~~~~~~~~~~~~~~~~~~~~~~~~~~~~~~~~~~~~~~~~~~~~\nonumber\\
	...~~~~~~~~~~~~~~~~~~~~~~~~~~~~~~~~~~~~~~~~~~~~~~~~~~~~~~~~~~~~~~~~~~~\nonumber\\
	\begin{cases}\nonumber
	\mathcal{B}^{(l)}_1:&=\{\ket{\psi(i_l,j_l)}_{1}^{(1)}\equiv\ket{\frac{(d-1)}{2}-l}_A\otimes\ket{\eta_{i_l}}_B\otimes\ket{\xi_{j_l}}_C\},\\
	\mathcal{B}^{(l)}_2:&=\{\ket{\psi(i_l,j_l)}_{2}^{(1)}\equiv\ket{\eta_{i_l}}_A\otimes\ket{\frac{(d-1)}{2}+l}_B\otimes\ket{\xi_{j_l}}_C\},\\
	\mathcal{B}^{(l)}_3:&=\{\ket{\psi(i_l,j_l)}_{3}^{(1)}\equiv\ket{\frac{(d-1)}{2}+l}_A\otimes\ket{\xi_{j_l}}_B\otimes\ket{\eta_{i_l}}_C\},\\
	\mathcal{B}^{(l)}_4:&=\{\ket{\psi(i_l,j_l)}_{4}^{(1)}\equiv\ket{\eta_{i_l}}_A\otimes\ket{\xi_{j_1}}_B\otimes\ket{\frac{(d-1)}{2}-l}_C\},\\
	\mathcal{B}^{(l)}_5:&=\{\ket{\psi(i_l,j_l)}_{5}^{(1)}\equiv\ket{\xi_{j_l}}_A\otimes\ket{\frac{(d-1)}{2}-l}_B\otimes\ket{\eta_{i_l}}_C\},\\
	\mathcal{B}^{(l)}_6:&=\{\ket{\psi(i_l,j_l)}_{6}^{(1)}\equiv\ket{\xi_{j_l}}_A\otimes\ket{\eta_{i_l}}_B\otimes\ket{\frac{(d-1)}{2}+l}_C\},\\
	\end{cases}\\
	...~~~~~~~~~~~~~~~~~~~~~~~~~~~~~~~~~~~~~~~~~~~~~~~~~~~~~~~~~~~~~~~~~~~\nonumber\\
	...~~~~~~~~~~~~~~~~~~~~~~~~~~~~~~~~~~~~~~~~~~~~~~~~~~~~~~~~~~~~~~~~~~~\nonumber\\
	\end{align}
\end{subequations}  
Here, for $l^{th}$ layer, $i_l,j_l\in\{0,\cdots,2l-1\}$; $\eta_{i_l=0}:=\ket{0+\cdots+(2l-1)}$, and other $\eta_{i_l}$'s are linear combinations of $\{\ket{k}\}_{k=0}^{l-1}$ in such a way that $\{\eta_{i_l}\}_{i_l=0}^{2l-1}$ are mutually orthogonal; $\xi_{j_l=0}:=\ket{1+\cdots+2l}$, and other $\xi_{j_l}$'s are linear combinations of $\{\ket{k}\}_{k=1}^{l}$ in such a way that $\{\xi_{j_l}\}_{j_l=0}^{2l-1}$ are mutually orthogonal.

{\bf UPB}: $
\mathcal{U}^{[d]}_{PB}:=\left\{\bigcup_{p=1,l=1}^{6,(d-1)/2}\left\{\mathcal{B}^{(l)}_p\setminus\ket{\psi(0,0)}^{(l)}_p\right\}\bigcup\ket{S}\right\}$, where $\ket{S}:=\ket{0+\cdots+(d-1)}^{\otimes 3}$.

{\bf UBB {symmetric}}: $\mathcal{U}^{S[d]}_{BB}:=\left\{\mathcal{U}^{[d]}_{PB}\bigcup_{l=1}^{(d-1)/2}\left\{\ket{\psi}^{(l)-}_{12},\ket{\psi}^{(l)-}_{34},\ket{\psi}^{(l)-}_{56}\right\}\right\}$.
Set of missing states are $\mathcal{M}:=\left\{\left\{\ket{kkk}\right\}_{k=0,...,d-1} \bigcup_{l=1}^{\frac{d-1}{2}}\left\{\ket{\psi}^{l+}_{ij}\right\}_{i=1,3,5}\right\}$, where $j=i+1$.

{\bf UBB {asymmetric}}: $ \mathcal{U}^{AB|C[d]}_{BB}:=\left\{\mathcal{U}^{[d]}_{PB}\bigcup_{l=1}^{(d-1)/2}\left\{\ket{\psi}^{(l)-}_{12},\ket{\psi}^{(l)-}_{35}\right\}\bigcup_{r=1}^{(d-1)/2}\left\{\ket{\psi}^{(r)-}_{\left(\frac{d-1}{2}-r\right)4},\ket{\psi}^{(r)-}_{\left(\frac{d-1}{2}+r\right)6}\right\}\right\}$.

{\bf Even dimension(s)}: $d=2n$, $n$ being positive integers. In this case we have $(d/2-1)$ layers of facial blocks, but layered around $\mathbb{C}^4\otimes\mathbb{C}^4\otimes\mathbb{C}^4$ structure. The twisted COPB is given by,
\begin{subequations}\label{tCOPBde}
	\begin{align}\nonumber
	\mathcal{B}_0:=\left\{\ket{\psi}_{kkk}\equiv\ket{k}_A\otimes\ket{k}_B\otimes\ket{k}_C|k\in\left\{0,\cdots,(d-1)\right\}\setminus\frac{d\pm 1}{2}\right\},\nonumber\\
	\mathcal{B}'_0:=\{\ket{\psi(l,m,p)}\equiv\ket{\phi_l}_A\otimes\ket{\phi_m}_B\otimes\ket{\phi_p}_C~|~l,m,p\in\{0,1\},\nonumber\\
	\begin{cases}\nonumber
	...&~~~~~~~~~~~~~~~~~~~~~~~~~~~~~~~~~~~~~~~~~~~~~~~~~~~~~~~~~~~~~~~~~~~\\
	...&~~~~~~~~~~~~~~~~~~~~~~~~~~~~~~~~~~~~~~~~~~~~~~~~~~~~~~~~~~~~~~~~~~~\\
	\end{cases}\\
	...~~~~~~~~~~~~~~~~~~~~~~~~~~~~~~~~~~~~~~~~~~~~~~~~~~~~~~~~~~~~~~~~~~~\nonumber\\
	...~~~~~~~~~~~~~~~~~~~~~~~~~~~~~~~~~~~~~~~~~~~~~~~~~~~~~~~~~~~~~~~~~~~\nonumber\\
	\begin{cases}\nonumber
	\mathcal{B}^{(l)}_1:&=\{\ket{\psi(i_l,j_l)}_{1}^{(1)}\equiv\ket{\frac{d}{2}-1-l}_A\otimes\ket{\eta_{i_l}}_B\otimes\ket{\xi_{j_l}}_C\},\\
	\mathcal{B}^{(l)}_2:&=\{\ket{\psi(i_l,j_l)}_{2}^{(1)}\equiv\ket{\eta_{i_l}}_A\otimes\ket{\frac{d}{2}+l}_B\otimes\ket{\xi_{j_l}}_C\},\\
	\mathcal{B}^{(l)}_3:&=\{\ket{\psi(i_l,j_l)}_{3}^{(1)}\equiv\ket{\frac{d}{2}+l}_A\otimes\ket{\xi_{j_l}}_B\otimes\ket{\eta_{i_l}}_C\},\\
	\mathcal{B}^{(l)}_4:&=\{\ket{\psi(i_l,j_l)}_{4}^{(1)}\equiv\ket{\eta_{i_l}}_A\otimes\ket{\xi_{j_1}}_B\otimes\ket{\frac{d}{2}-1-l}_C\},\\
	\mathcal{B}^{(l)}_5:&=\{\ket{\psi(i_l,j_l)}_{5}^{(1)}\equiv\ket{\xi_{j_l}}_A\otimes\ket{\frac{d}{2}-1-l}_B\otimes\ket{\eta_{i_l}}_C\},\\
	\mathcal{B}^{(l)}_6:&=\{\ket{\psi(i_l,j_l)}_{6}^{(1)}\equiv\ket{\xi_{j_l}}_A\otimes\ket{\eta_{i_l}}_B\otimes\ket{\frac{d}{2}+l}_C\},\\
	\end{cases}\\
	...~~~~~~~~~~~~~~~~~~~~~~~~~~~~~~~~~~~~~~~~~~~~~~~~~~~~~~~~~~~~~~~~~~~\nonumber\\
	...~~~~~~~~~~~~~~~~~~~~~~~~~~~~~~~~~~~~~~~~~~~~~~~~~~~~~~~~~~~~~~~~~~~\nonumber\\
	\end{align}
\end{subequations}
$\ket{\phi_0}:=\ket{d/2-1}+\ket{d/2}$, $\ket{\phi_1}:=\ket{d/2-1}-\ket{d/2}$; for $l^{th}$ layer, $i_l,j_l\in\{0,\cdots,2l\}$; $\eta_{i_l=0}:=\ket{0+\cdots+2l}$, and other $\eta_{i_l}$'s are linear combinations of $\{\ket{k}\}_{k=0}^{2l}$ in such a way that $\{\eta_{i_l}\}_{i_l=0}^{2l}$ are mutually orthogonal; $\xi_{j_l=0}:=\ket{1+\cdots+(2l+1)}$, and other $\xi_{j_l}$'s are linear combinations of $\{\ket{k}\}_{k=1}^{2l+1}$ in such a way that $\{\xi_{j_l}\}_{j_l=0}^{2l}$ are mutually orthogonal.

{\bf UPB}: $
\mathcal{U}^{[d]}_{PB}:=\left\{\bigcup_{p=1,l=1}^{6,(d/2-1)}\left\{\mathcal{B}^{(l)}_p\setminus\ket{\psi(0,0)}^{(l)}_p\right\}\bigcup\left\{\mathcal{B}'_0\setminus\ket{\psi(0,0,0)}\right\}\bigcup\ket{S}\right\}$, where $\ket{S}:=\ket{0+\cdots+(d-1)}^{\otimes 3}$.

{\bf UBB {symmetric}}: $
\mathcal{U}^{S[d]}_{BB}:=\left\{\mathcal{U}^{[d]}_{PB}\bigcup_{l=1}^{(d/2-1)}\left\{\ket{\psi}^{(l)-}_{12},\ket{\psi}^{(l)-}_{34},\ket{\psi}^{(l)-}_{56}\right\}\right\}$ and the missing states are $\mathcal{M}:=\{\{\ket{kkk}\}_{k\in\{0,..d-1\}\setminus\{\frac{d}{2}-1,\frac{d}{2}+1\}}\bigcup\ket{\psi(0,0,0)}\bigcup_{l=1}^{\frac{d}{2}-1}\{\ket{\psi}^{l+}_{ij}\}_{i=1,3,5}\}$, where $j=i+1$.

{\bf UBB {asymmetric}}: $
\mathcal{U}^{AB|C[d]}_{BB}:=\left\{\mathcal{U}^{[d]}_{PB}\bigcup_{l=1}^{(d/2-1)}\left\{\ket{\psi}^{(l)-}_{12},\ket{\psi}^{(l)-}_{35}\right\}\bigcup_{r=1}^{(d/2-1)}\left\{\ket{\psi}^{(r)-}_{\left(\frac{d}{2}-1-r\right)4},\ket{\psi}^{(r)-}_{\left(\frac{d}{2}+r\right)6}\right\}\right\}$.

\end{widetext}

\end{document}